\newcommand{\blind}{1}
\tikzstyle{dir}= [postaction={decorate,
\tikzstyle{dirs}= [postaction={decorate,
\tikzstyle{nd} = [circle, fill=black,
\tikzstyle{bnd} = [circle,fill=black,
\tikzstyle{rnd} = [circle, fill=black!30,
\tikzstyle{brnd} = [circle,fill=black!30,
\newcommand{\Textedge}{%
\raisebox{0.3mm}{\!\!
\tikz[scale=.2,clip]{\draw[thick]
(180:.6) node[nd] {}--
(  0:.6) node[nd]{};
}}\hspace{-1.4mm}
}
\newcommand{\Texttriangle}{%
\raisebox{-0.45mm}{\!\!
\tikz[scale=.2,clip]{\draw[thick]
(210:.6) node[nd] {}--
(  90:.6) node[nd]{}--
( -30:.6) node[nd]{}--
(210:.6) node[nd]{};
}}\hspace{-.9mm}
}
\newcommand{\Textsquare}{%
\raisebox{-0.45mm}{\!
\tikz[scale=.2,clip]{\draw[thick]
(        45:.65) node[nd]{}--
(  45+90:.65) node[nd]{}--
(45+180:.65) node[nd]{}--
(45+270:.65) node[nd]{}--
(        45:.65) node[nd]{};
}}\hspace{-.9mm}
}
\newcommand{\Textdiamond}{%
\raisebox{-0.45mm}{\!
\tikz[scale=.2,clip]{\draw[thick]
(        45:.65) node[nd]{}--
(  45+90:.65) node[nd]{}--
(45+180:.65) node[nd]{}--
(45+270:.65) node[nd]{}--
(        45:.65) node[nd]{}--
(45+180:.65) node[nd]{};
}}\hspace{-.5mm}
}
\newcommand{\Textbowtie}{%
\raisebox{-0.45mm}{\!
\tikz[scale=.2,clip]{\draw[thick]
(        30:.75) node[nd]{}--
(          0:   0) node[nd]{}--
(30+180:.75) node[nd]{}--
(-30-180:.75) node[nd]{}--
(          0:   0) node[nd]{}--
(       -30:.75) node[nd]{}--
(        30:.75) node[nd] {};
}}
}
\newcommand{\Texthourglass}{%
\raisebox{-0.45mm}{\!
\tikz[scale=.2,clip]{\draw[thick]
(        45:.75) node[nd]{}--
(          0:   0) node[nd]{}--
(45+180:.75) node[nd]{}--
(      180:1.1) node[nd]{}--
(-45-180:.75) node[nd]{}--
(          0:   0) node[nd]{}--
(       -45:.75) node[nd]{}--
(          0:1.1) node[nd]{}--
(        45:.75) node[nd] {};
}}
}
\newcommand{\Textdblsqr}{%
\raisebox{-0.7mm}{\!
\tikz[scale=.2,clip]{\draw[thick]
(        90:.70) node[nd]{}--
(          0:.45) node[nd]{}--
(       -90:.70) node[nd]{}--
(      180:.45) node[nd]{}--
(        90:.70) node[nd]{}--
(        90:.70) node[nd]{}--
(          0:1.1) node[nd]{}--
(       -90:.70) node[nd]{}--
(      180:1.1) node[nd]{}--
(        90:.70) node[nd]{};
}}
}
\newcommand{\TextdblsqrBis}{%
\raisebox{-0.7mm}{\!
\tikz[scale=.2,clip]{\draw[thick]
(        45:.75) node[nd]{}--
(       -45:.75) node[nd]{}--
(     -135:.75) node[nd]{}--
(      135:.75) node[nd]{}--
(        45:.75) node[nd]{}--
(          0:   0) node[nd]{}--
(       -45:.75) node[nd]{}--
(          0:1.1) node[nd]{}--
(        45:.75) node[nd]{};
}}
}
\newcommand{\TextdblsqrTer}{%
\raisebox{-0.7mm}{\!
\tikz[scale=.17,clip]{\draw[thick]
(0,0) node[nd]{}--
(1,0) node[nd]{}--
(1,1) node[nd]{}--
(2,1) node[nd]{}--
(2,0) node[nd]{}--
(1,0) node[nd]{}--
(1,1) node[nd]{}--
(0,1) node[nd]{}--
(0,0) node[nd]{};
}}
}
\newcommand{\TextdblsqrQuad}{%
\raisebox{-0.7mm}{\!
\tikz[scale=.2,clip]{\draw[thick]
(0,0) node[nd]{}--
(1,0) node[nd]{}--
(2,0) node[nd]{}--
(1.5,.87) node[nd]{}--
(1,0) node[nd]{}--
(.5,.87) node[nd]{}--
(0,0) node[nd]{}--
(.5,.87) node[nd]{}--
(1.5,.87) node[nd]{};
}}
}
\newcommand{\TextdblsqrQint}{%
\raisebox{-0.7mm}{\!
\tikz[scale=.18,clip]{\draw[thick]
(        90:.75) node[nd]{}--
(          0:.75) node[nd]{}--
(       -90:.75) node[nd]{}--
(      180:.75) node[nd]{}--
(        90:.75) node[nd]{}--
(          0:   0) node[nd]{}--
(       -90:.75) node[nd]{};
}}
}
\newcommand{\TextdblsqrSext}{%
\raisebox{-0.7mm}{\!
\tikz[scale=.18,clip]{\draw[thick]
(0,0) node[nd]{}--
(1,0) node[nd]{}--
(1,1) node[nd]{}--
(0,1) node[nd]{}--
(0,0) node[nd]{}--
(1,1) node[nd]{}--
(0,1) node[nd]{}--
(1,0) node[nd]{};
}}
}
\theoremstyle{plain}
\crefname{equation}{}{}
\crefname{figure}{Figure}{Figures}
\crefname{table}{Table}{Tables}
\newtheorem{Definition}{Definition}
\crefname{Definition}{Definition}{Definition}
\newtheorem{Lemma}{Lemma}
\crefname{Lemma}{Lemma}{Lemmas}
\newtheorem{Proposition}{Proposition}
\crefname{Proposition}{Proposition}{Propositions}
\newtheorem{Theorem}{Theorem}
\crefname{Theorem}{Theorem}{Theorems}
\newtheorem{Corollary}{Corollary}
\crefname{Corollary}{Corollary}{Corollarys}
\newtheorem{Remark}{Remark}
\crefname{Remark}{Remark}{Remarks}
\newcommand{\cov}{\ensuremath{\mathrm{cov}}}
\newcommand{\E}{\ensuremath{\mathbb{E}}}
\newcommand{\var}{\ensuremath{\mathrm{var}}}
\newcommand{\D}{\ensuremath{\mathcal{D}}}
\newcommand{\F}{\ensuremath{\mathcal{F}}}
\newcommand{\G}{\ensuremath{\mathcal{G}}}
\newcommand{\X}{\ensuremath{\mathcal{X}}}
\renewcommand{\P}{\ensuremath{\operatorname{\mathbb{P}}}}
\newcommand{\pushright}[1]{\ifmeasuring@#1\else\omit\hfill$\displaystyle#1$\fi\ignorespaces}
\newcommand{\pushleft}[1]{\ifmeasuring@#1\else\omit$\displaystyle#1$\hfill\fi\ignorespaces}
\renewcommand\footnotemark{}
\date{}
\begin{document}
\def\spacingset#1{\renewcommand{\baselinestretch}%
{#1}\small\normalsize} \spacingset{1}
\newcommand{\TheTitle}{Testing for Equivalence of Network\\[.1\baselineskip] Distribution Using Subgraph Counts}
\if1\blind
{
 \title{\bf \TheTitle}
 \author{P-A. G. Maugis\thanks{
 This work was supported in part by DARPA under SIMPLEX contract N66001-15-C-4041; by the US Army Research Office under Multidisciplinary University Research Initiative Award 58153-MA-MUR; by the US Office of Naval Research under Award N00014-14-1-0819; by the UK Engineering and Physical Sciences Research Council under Mathematical Sciences Leadership Fellowship EP/I005250/1, Established Career Fellowship EP/K005413/1, Developing Leaders Award EP/L001519/1, and Award EP/N007336/1; by the UK Royal Society under a Wolfson Research Merit Award; and by Marie Curie FP7 Integration Grant PCIG12-GA-2012-334622 and the European Research Council under Grant CoG 2015-682172NETS, both within the Seventh European Union Framework Program. The authors thank the Isaac Newton Institute for Mathematical Sciences, Cambridge, UK, for support and hospitality during the program Theoretical Foundations for Statistical Network Analysis (EPSRC grant no. EP/K032208/1) where a portion of the work on this paper was undertaken. The authors also thank Joshua T. Vogelstein and his team for connectome data and expertise.}\hspace{.2cm}\\
  Department of Statistical Science, University College London\\
  and\\
  S. C. Olhede\\
  School of Basic Sciences, \'Ecole Polytechnique F\'ed\'erale de Lausanne\\
  and\\
  C. E. Priebe\\
  Department of Applied Mathematics and Statistics, Johns Hopkins University
  and\\
  P. J. Wolfe\\
  Department of Statistics, Purdue University}
 \maketitle
 \newpage
} \fi

\if0\blind
{
 \bigskip
 \bigskip
 \bigskip
 \begin{center}
  {\LARGE\bf \TheTitle}
\end{center}
 \medskip
 \bigskip
 \newpage
} \fi
\begin{abstract}
We consider that a network is an observation, and a collection of observed networks forms a sample. In this setting, we provide methods to test whether all observations in a network sample are drawn from a specified model. We achieve this by deriving the joint asymptotic properties of average subgraph counts as the number of observed networks increases but the number of nodes in each network remains finite. In doing so, we do not require that each observed network contains the same number of nodes, or is drawn from the same distribution. Our results yield joint confidence regions for subgraph counts, and therefore methods for testing whether the observations in a network sample are drawn from: a specified distribution, a specified model, or from the same model as another network sample. We present simulation experiments and an illustrative example on a sample of brain networks where we find that highly creative individuals' brains present significantly more short cycles than found in less creative people.
\end{abstract}

\noindent%
{\it Keywords:} Statistical Testing, Subgraph Count Statistics
\vfill

\newpage
\spacingset{1.45} 
\section{Introduction}

We show that subgraph counts are flexible and powerful statistics for testing distributional properties of networks, when more than one network is observed. Specifically, we use subgraph counts to test the hypotheses that all networks in a sample are generated either from a given distribution, from distributions specified by a given model class, or from the same model as that of another sample.

Our results address the fundamental inference problem raised by the following experiment~\citep{gray2012magnetic}: The networks connecting brain regions of individuals of varied levels of creativity are observed. However, while these observations can be assumed to be independent, due to the variability of the brain structure and the instability of the observation technique, they cannot be assumed to be identically distributed; for instance, they need not contain as many nodes and edges. Unfortunately, this implies that if we identify each network realization with its adjacency matrix, the obtained matrices will be of different sizes. This prevents, for instance, estimating the distribution the adjacency matrices are drawn from as one would from a sample of independent and identically distributed realizations. How, while allowing for such variations, can we test for significant differences between individuals with different levels of creativity?

Formally, we consider that each network is an observation---say $G_i$ for each subject---and a collection of observed networks form a sample---say $\G = (G_1,\dots,G_N)$. Then, our goal is to compare distributional properties of the $G_i$-s as $N$ grows. This parallels more classical statistical settings, where an observation is a vector---such as ${\bm X}_i\in\mathbb{R}^k$---and a sample is a matrix: $\X = ({\bm X}_1,\dots,{\bm X}_N)\in\mathbb{R}^{k\times N}$. However, this parallel with the classical setting stops there. Indeed, the $G_i$ need not be of the same size, or have nodes that are matched; i.e., the $X_i$-s would not have the same dimension, and the entries could be shuffled.

This setting strongly differs from the two settings that have already seen extensive research. First, the setting where only one or two very large network are observed and for which many statistical tests already exist (see~\cite{hoff2012latent,ho2012multiscale,BickelLevina2012,Sussman2012,bhattacharyya2013subsampling,tang2013,olhede2013network,fosdick2015testing,klopp2016, coulson2016poisson}, to cite but a few). Available statistical tests for network comparison in that setting focus on the case where one or two large networks are observed~\citep{asta2014geometric,tang2016nonpara,gao2017testing,banerjee2017testing,
ghoshdastidar2017testing}, or when one finite network is compared to a fixed model alone~\citep{birmele2012detecting,Ali2014alignment,ali2016comparison}. The second case addresses graph samples, as we do here, but under very restrictive assumptions: samples that are independent and identically distributed, where all graphs have the same size, and where nodes across networks can be matched one-to-one. Then, under these assumptions, it is possible to compare network summaries using classical statistical methods~\citep{simpson2012exponential,stoffers2013,daianu2013,ginestet2017}, or by fitting a parametric or semi-parametric model~\citep{simpson2012exponential,wang2018,durante2018}. 

We provide, in the graph sample setting described above, an analog of the multivariate $t$-test for network samples: methods to test whether a given network sample $\G$ presents {\em averages} consistent with either a specific model, or with that of another sample. The averages we use are {\em subgraph counts}; e.g., the number of \Texttriangle~or \Textsquare~in the sample. The choice of subgraph counts as statistics is motivated by their success in comparing large networks~\citep{milo2002network,Ali2014alignment}, but also by results in random graph theory and the study of large graphs. In both fields, subgraph counts have proved to be the most powerful tool available to compare networks~\citep{Ali2014alignment,ali2016comparison}
and are known to have properties similar to moments of random variables when studying large networks~\citep{diaconis2008graph,BickelLevina2012,lovasz2012large,chatterjee2013estimating}. Finally, on account of these results, many powerful algorithm exists to count subgraphs efficiently; e.g., ~\cite{talukder2016distributed,ortmann2016quad}.

Formally, we are representing network samples in a space defined by subgraph counts, and performing comparisons in that space. While other network comparison techniques also use embeddings~\citep{asta2014geometric,tang2016nonpara,gao2017testing,wang2018}, using subgraph counts presents three key advantages: first, if the $G_i$-s are generated by a blockmodel~\citep{hoff2012latent}---one of the most popular random network models to date---and for some families of subgraphs, the embedding is one-to-one. This result is known as the finite forciblity of a family of graphs~\citep[Ch.~16]{lovasz2012large}. Second, very few assumptions on each $G_i$ need to be made as $N$ grows to obtain consistency and asymptotic normality of the image of $\G$ in the embedding space. This enables us to work under a very flexible null model. Finally, because it relies on direct summaries of the $G_i$-s---the number of~\Texttriangle,~\Textsquare, and so on---this embedding remains interpretable, part of the appeal to use subgraph based inference.

The main risk in subgraphs for testing purposes is that we cannot be assured that the subgraphs considered are sufficient to distinguish the null and the data generative mechanism. We provide several experiment allowing us to claim that while possible, this does not appear to be common. Furthermore, this risk is balanced by the interpretability of subgraph counts: if the null and the generative mechanism have the same number of subgraph, they are maybe equivalent for the purpose of the study at hand. Another shortcoming is that subgraph counts can be highly correlated, especially in denser networks, making the estimation of the inverse of their covariance matrix unstable. One of our contribution are closed form formulae for these covariance matrices under our null, which reduces but does not fully resolves this issue. Finally, because subgraph counting library are not standard, implementing the proposed methods is not as easy as for other methods, which tend to rely on more common, linear algebra related, data transformation pipelines. To this end we made all the code to perform our analysis available\footnote{Code to perform all methods, as well as reproduce all figure and tables reported in the article, can be found at: \url{https://github.com/PierreAndreMaugis/motifs-for-network-samples}}.

In the remainder of this article, we first introduce subgraph counts and the kernel based random graph model. We then present, successively, the case where all the networks in the sample come from the same kernel model, and the case where each observed network may come from a different kernels. In both cases, we prove asymptotic normality of our estimator, present a plug-in estimator of its variance, describe the limit of the estimator under the alternative hypothesis, and provide representative examples showing the practical utility of the result. Methods to produce the figures, as well as supporting simulations, can be found in the Supplementary Material. We conclude with an analysis of connectome data, and with a discussion.

\section{Subgraph Counts in Kernel Based Random Graphs}

We now define our statistics (subgraph counts) and our null model (the kernel based random graph model). Subgraph counts are natural statistics to compare networks for two reasons. First, subgraph counts intuitively summarize a network through its fundamental building blocks. This has historically given them purchase to address hard fundamental and empirical problems~\citep{rucinski1988small,milo2002network,Ali2014alignment}. Second, subgraph counts present tractable analytical properties. We will describe and leverage these properties below, in a manner paralleling what is done in related literature~\citep{BickelLevina2012,bhattacharyya2013subsampling,rucinski1988small}.

\begin{figure}
\centering
\def\s{1.35}
	\def\x{6}
	\tikzset{ultra thick/.style={line width=\s pt}}
	\tikzstyle{n0}=[circle, draw=black, fill=black,
			inner sep=0pt, minimum width=\s*\x pt]
	\tikzset{e0/.style={ultra thick,color=black}}

(a)$\ $
\begin{tikzpicture}[scale=\s]
	\draw  {(0,0) node[n0] {} edge[e0] (1,1)};
	\draw  {(0,0) node[n0] {} edge[e0] (0,1)};
	\draw  {(0,0) node[n0] {} edge[e0] (1,0)};
	\draw  {(0,1) node[n0] {} edge[e0] (1,1)};
	\draw  {(1,0) node[n0] {} edge[e0] (1,1)};
	\draw  {(1,1) node[n0] {} edge[e0] (2,1)};
	\draw  {(2,1) node[n0] {} edge[e0] (1,1)};
\end{tikzpicture}
$\qquad$(b)$\ $
\begin{tikzpicture}[scale=\s]
	\draw  {(0,0) node[n0] {} edge[e0] (1,1)};
	\draw  {(0,0) node[n0] {} edge[e0] (0,1)};
	\draw  {(0,1) node[n0] {} edge[e0] (1,1)};
	\draw  {(1,0) node[n0] {} edge[e0] (1,1)};
	\draw  {(1,1) node[n0] {} edge[e0] (2,1)};
	\draw  {(2,1) node[n0] {} edge[e0] (1,0)};
\end{tikzpicture}
$\qquad$(c)$\ $
\begin{tikzpicture}[scale=\s]
	\draw  {(0,0) node[n0] {} edge[e0] (.75,.5)};
	\draw  {(0,0) node[n0] {} edge[e0] (0,1)};
	\draw  {(.75,.5) node[n0] {} edge[e0] (1.5,1)};
	\draw  {(1.5,1) node[n0] {} edge[e0] (1.5,0)};
	\draw  {(0,1) node[n0] {} edge[e0] (1.5,1)};
	\draw  {(1.5,0) node[n0] {} edge[e0] (0,0)};
\end{tikzpicture}
\vspace{-.5\baselineskip}
\caption{\label{GraphEx}Example of subgraph counts. There are $6$ copies of the edge (\protect\Textedge) in all three graphs. There are $2$ copies of the triangle (\protect\Texttriangle) in a) and b), but $0$ in c). There are $1$, $0$ and $3$ copies of the square (\protect\Textsquare) in a), b) and c) respectively.}
\end{figure}

A \emph{subgraph count} is the number of copies of a given graph in another graph (see \cref{GraphEx}). Throughout, we call the subgraph---and denote $F$---the graph which is counted and $G$ the larger graph in which the counting takes place. All graphs will be simple (unweighted, no self loops or multiple edges). Subgraphs are also termed motifs, pattern graphs or shapes depending on the field~\citep{milo2002network,alon1997cycles,hocevar2014,benson2016higher}.

For clarity, we define subgraph counts formally as follows:
\begin{Definition}[Graph equivalence `$\equiv$']
Fix two graphs $F$ and $F'$. We say that $F$ is equivalent to---or is a copy of---$F'$, and write $F\equiv F'$, if there exists a bijective map $\phi$ from the vertex set of $F$ to the vertex set of $F'$ such that $ij$ is an edge in $F$ iff $\phi(i)\phi(j)$ is an edge in $F'$.
\end{Definition}
\begin{Definition}[Subgraph count $X_F(G)$]\label{SubgraphDef}
Fix two graphs $F$ and $G$. We denote $X_F(G)$ the number of non-necessarily induced subgraphs of $G$ equivalent to $F$; i.e,
\[
X_F(G) = \#\left\{F'\subset G : F'\equiv F\right\},
\]
where $F\subset G$ if the vertex end edge sets of $F$ are subsets of those of $G$.
\end{Definition}
With this notation, calling $G_a$, $G_b$ and $G_c$ the graphs in \cref{GraphEx}, we have $X_{\!\Textsquare}\ (G_a)=1$, $X_{\!\Textsquare}\ (G_b)=0$, $X_{\!\Textsquare}\ (G_c)=3$. A more complete definition is provided in \cref{SubgraphDef}.

The power of subgraph counts to study networks stems from their inherent linearity. Indeed, products of subgraph counts are but linear combinations of other subgraph counts. Intuitively, first observe that a product of two subgraph counts will involve counting pairs of copies. Then, a product of subgraph counts can be recovered by counting the number of copies of all subgraphs that can be induced by a pair of copies. More precisely, in the Appendix we show the following, which is implicitly used in \citet{rucinski1988small}:
\begin{Lemma}[Linearity of subgraph counts \citep{rucinski1988small}]\label{LinearityLemma}
For any two graphs $F$ and $F'$, there are factors $c_H$ and a set $\mathcal{H}_{FF'}$ of subgraphs---the set of subgraphs that can be obtained using one copy of each $F$ and $F'$ as building blocks---such that for any graph $G$
\[
X_{F_1}(G)X_{F_2}(G) = \sum_{H\in\mathcal{H}_{FF'}} c_HX_H(G).
\]
\end{Lemma}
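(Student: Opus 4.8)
The plan is to read the product $X_F(G)\,X_{F'}(G)$ as a count of \emph{ordered pairs of copies} and then to sort these pairs by the isomorphism type of the subgraph they jointly span. By definition,
\[
X_F(G)\,X_{F'}(G) = \#\big\{(A,B) : A\subset G \text{ a copy of } F,\ B\subset G \text{ a copy of } F'\big\},
\]
since each factor counts copies and the product counts ordered pairs. To each such pair I associate the subgraph $A\cup B\subset G$ obtained by superimposing the two copies---that is, the graph on the union of their vertex sets whose edge set is the union of their edge sets. Every pair $(A,B)$ lands in exactly one such union, so this association partitions the set of pairs.

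Next I would define $\mathcal{H}_{FF'}$ to be the (finite) collection of isomorphism types of graphs $H$ that arise as $A\cup B$ for some copy $A$ of $F$ and some copy $B$ of $F'$ with $A\cup B = H$; equivalently, $H$ is any graph that can be \emph{covered} by one copy of $F$ together with one copy of $F'$. Finiteness is immediate, as any such $H$ has at most $|F|+|F'|$ nodes. For a fixed representative $H$ of each type I set
\[
c_H = \#\big\{(A,B) : A,B\subset H,\ A \text{ a copy of } F,\ B \text{ a copy of } F',\ A\cup B = H\big\},
\]
the number of ways to decompose $H$ itself into an $F$-copy and an $F'$-copy that together span it.

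The key step is to verify that $c_H$ is the correct local multiplicity, namely that for \emph{any} subgraph $H'\subset G$ with $H'\cong H$ the number of pairs $(A,B)$ of copies of $F$ and $F'$ in $G$ satisfying $A\cup B = H'$ equals $c_H$. This follows by transporting decompositions along an isomorphism $\varphi\colon H\to H'$: since $\varphi$ is an adjacency preserving bijection, it carries each decomposition of $H$ to one of $H'$ and conversely, giving a bijection between the two sets of decompositions. Summing over isomorphism types and then over the copies of each type in $G$ yields
\[
X_F(G)\,X_{F'}(G) = \sum_{H\in\mathcal{H}_{FF'}}\ \sum_{\substack{H'\subset G\\ H'\cong H}} c_H = \sum_{H\in\mathcal{H}_{FF'}} c_H\,X_H(G),
\]
which is the claimed identity.

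I expect the only genuine subtlety---the main obstacle---to be the bookkeeping that makes the partition well defined and free of double counting. The crucial requirement is that $H'$ be taken to be \emph{exactly} the union $A\cup B$, not merely some subgraph containing both copies; this ensures that each pair $(A,B)$ is assigned to a unique copy $H'$ (namely its own span) and hence to a single term of the outer sum. Keeping the covering condition $A\cup B = H'$ throughout is what prevents a pair from being counted inside several larger host subgraphs, and it is also what guarantees that $c_H$ depends only on the isomorphism type of $H$ rather than on its particular embedding in $G$.
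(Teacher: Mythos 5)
Your proof is correct and takes essentially the same route as the paper's: expand the product as a sum over ordered pairs of copies, group the pairs according to the isomorphism type of the union $A\cup B$, and use the fact that the number of spanning decompositions $c_H$ depends only on that type, so each copy of $H$ in $G$ contributes exactly $c_H$ pairs. The only difference is one of emphasis---you spell out the transport-along-isomorphism argument showing $c_H$ is well defined, a step the paper leaves implicit in its appeal to the definition of $c_H$.
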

For instance, as these will be used later on, we have that in {\emph{any}} graph $G$,
\begin{align*}
X_{\!\Texttriangle\,}(G)^2 &=
	2X_{\!\Texttriangle\ \Texttriangle\,}(G)
	+2X_{\!\Textbowtie}(G)
	+2X_{\!\Textdiamond\,}(G)
	+ X_{\!\Texttriangle\,}(G),\\
X_{\!\Textsquare\ }(G)^2 &=
	2X_{\!\Textsquare\,\Textsquare\ }(G)
	+2X_{\!\Texthourglass}(G)
	+6X_{\!\Textdblsqr}(G)
	+2X_{\!\TextdblsqrBis}(G)\\
&	\qquad+2X_{\!\TextdblsqrTer}(G)
		+2X_{\!\!\TextdblsqrQuad}(G)
		+6X_{\!\!\TextdblsqrQint}(G)
		+6X_{\!\!\TextdblsqrSext}(G)
		+ X_{\!\Textsquare\ }(G).
\end{align*}
This algebraic property of subgraph counts allows us to understand the proofs of~\cite{rucinski1988small,bhattacharyya2013subsampling}. The property is also crucial to the subgraph counting algorithms of~\cite{hocevar2014}, and can be found in many other examples. Crucially, as opposed to cases where the model enforces linearity---such as with assumptions of Normality---it is the nature of the statistics (subgraph counts) and the system (graphs) that makes the problem linear.

The linearity of subgraph counts allows us to use as null the very flexible kernel based model~\citep{lovasz2012large}. This framework subsumes most models used in the statistical literature on networks; e.g., blockmodel~\citep{hoff2012latent} and dot-product models~\citep{Sussman2012}. It has the intuitive structure of affixing to each node $i$ a latent feature (here $x_i$) and of connecting nodes $i$ and $j$ (conditionally independently) with a probability determined by the node features (here $f(x_i,x_j)$).
\begin{Definition}[Kernel $f$ and random graph $G_n(f)$]
Fix a symmetric measurable map $f:[0,1]^2\to[0,1]$, and call it a {\emph kernel}. We call $G_n(f)$ the random graph distribution over graphs with $n$ nodes such that: to each node is randomly and independently assigned a feature $x_i\in[0,1]$, with $x_i\sim\mathrm{Unif}([0,1])$; and where edges form independently conditionally on $\{x_i\}_{i\in[n]}$ with conditional probability
\[
\P[ij\in G| x_i, x_j] = f(x_i,x_j).
\]
\end{Definition}
To recover a blockmodel with $K$ blocks, it suffices to consider a partition of $[0,1]$ in $K$ sets (i.e., $(P_1,\dots,P_K)\in\mathcal{P}_K([0,1])$) and set $f$ as constant over each $P_u\times P_v$. The dot-product model is recovered with a kernel $f$ of finite rank; i.e., $f(x,y)=\sum_{u\leq K}\lambda_u f_u(x)f_u(y)$.

The model assumes that vertices' location in the latent space are independent and identically distributed, so that the model does not assume any structure or symmetry among the nodes in the observed networks. However, it can accommodate any such structure, which would be recovered through estimation. Nonetheless, in some cases it may seem relevant to enforce such structure; e.g., in our brain example, assume that nodes in different hemispheres are unlikely to connect. Doing so risks misspecifying the model to gain faster convergence, which might be necessary in some cases. We did not find it necessary in our application. 

In the kernel framework, subgraph counts have direct interpretation as moments of $f$~\citep{lovasz2012large} (see~\cref{kern_mom}). Specifically, if $G\sim G_n(f)$, then the moments of $X_F(G)$ are moments of $f$. An infinite number of subgraph counts are sufficient statistics to distinguish between any two kernel~\citep{lovasz2012large,bollobas2009metric}, as the subgraph counts can be used to define the subgraph metric. 
However, there are no guarantees on which or how many subgraph counts are needed to distinguish between two kernels. For blockmodels and finite rank models, we know only that a finite number is sufficient (a concept know as finite forcibility, see~\citet[Chapter~16.7 \& Appendix~4]{lovasz2012large} for more details).

This makes subgraph counts especially appropriate as statistics in an hypothesis test. Indeed, for any finite set of subgraphs $\F$, we could have two kernels $f$ and $f'$ such that $f\neq f'$, and yet $\forall F\in\F, \E_{G\sim G_n(f)} X_F(G)=\E_{G\sim G_n(f')} X_F(G)$. However, as we prove below, if $\E_{G\sim G_n(f)} X_F(G)\neq\E_{G\sim G_n(f')} X_F(G)$ for any $F$, then $f\neq f'$. Therefore, a difference in subgraph counts is sufficient but not necessary to distinguish between kernels. Conversely, not observing a difference in subgraph count can only imply that we do not observe a difference in the kernels, and therefore that we fail to reject the hypotheses that the kernel are equal. This has implication regarding the power of our test that we explore below.

Unfortunately, all known results on subgraph counts under the kernel model consider the setting where one very large graph is observed. Here we present the tools to address the problem where a sample of graphs is observed.

\section{The simple case: Samples from one kernel}
We now present a central limit theorem as well as practical methods to build confidence regions for the subgraph counts observed in a network sample $\G=(G_1,\dots,G_N)$. In this section, we assume that there is a kernel $f$ such that each $G_i$ is drawn independently from $G_{n_i}(f)$ (with $n_i=|G_i|$, where for any graph $F$ we write $|F|$ for the number of nodes in $F$).

Fix $F\in\F$ and $G\in\G$. In this setting, $X_F(G)$ is a random variable, and the first parameter to consider is its mean. To compute this mean, let $F_1,\dots F_{X_F(K_G)}$ be all the copies of $F$ in $K_G$ (the complete graph over the nodes of $G$), so that using the linearity of the expectation, we have that
\[
\E X_F(G) = \E \sum_{j\in\left[X_F(K_G)\right]}1_{\{F_j\subset G\}}
= \sum_{j\in\left[X_F(K_G)\right]}\E1_{\{F_j\subset G\}}.
\]
Then, direct computations show that $\E1_{\{F_j\subset G\}}$ does not depend on $j$ (see \cref{first-moms}), and that
\begin{equation}\label{kern_mom}
\E1_{\{F_j\subset G\}} = \mu_F(f) := \int_{[0,1]^{|F|}}\prod_{uv\in F}f(x_u,x_v)\prod_{u\in F}dx_u,
\end{equation}
so that $\E X_F(G) = X_F(K_G)\mu_F(f)$. Observe that $\mu_F(f)$ is a moment of the kernel $f$, as discussed above.

Similar computations for higher moments, aided by \cref{LinearityLemma}, enable us to use the Lindeberg-Feller central limit theorem and the Cramer-Wold device to prove the following:
\begin{Theorem}[Statistical properties of subgraph counts]\label{lyapu}
Fix a set of graphs $\F$, a kernel $f$ and a sequence $n=(n_i)_{i\in\mathbb{N}}$ such that $2\max_{F\in\F}|F|\leq\min_{i\in\mathbb{N}}n_i$. Let $\G=(G_i)_{i\in[N]}$ be a network sample such that for all $i$, $G_i\sim G_{n_i}(f)$. Set $\hat{\mu}_F(\G)=N^{-1}\sum_{G\in\G}{X_F(G)}/{X_F(K_G)}$, $\hat{\bm\mu}_\F(\G)=(\hat{\mu}_F(\G))_{F\in\F}$, and ${\bm\mu}_\F(f)=(\mu_F(f))_{F\in\F}$. Then, $\hat{\bm\mu}_\F(\G)$ is an unbiased, $\sqrt{N}$-consistent and asymptotically normal estimator of ${\bm\mu}_\F(f)$; i.e., $\E \hat{\bm\mu}_\F(\G) = {\bm\mu}_\F(f)$ and there exists a positive semi-definite ${\bm\Sigma}_\F(n,f)$ such that asymptotically in $N$:
\[
\sqrt{N}\big(\hat{\bm\mu}_\F(\G)-{\bm\mu}_\F(f)\big)
\to
\mathrm{Normal}\big(0,{\bm\Sigma}_\F(n,f)\big).
\]
Furthermore, for each $F,F'\in\F$,
\begin{equation}\label{cov}
\cov(\hat{\mu}_F(\G),\hat{\mu}_{F'}(\G))= \sum_{H\in\mathcal{H}_{FF'}\setminus \{F\sqcup F'\}}\omega_H(n;N)\big(\mu_H(f)-\mu_F(f)\mu_{F'}(f)\big),
\end{equation}
with $F\sqcup F'$ the disjoint union of $F,F'$, $\omega_H(n;N)\!=\!\frac1N\sum_{i=1}^N\frac{c_HX_H(K_{G_i})}{X_F(K_{G_i})X_{F'}(K_{G_i})}$, and $c_H$ is defined in \cref{LinearityLemma}.
\end{Theorem}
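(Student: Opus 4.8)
The plan is to read $\hat{\mu}_F(\G)$ as a normalized sum of the $N$ \emph{independent} random variables $Y_{i,F} := X_F(G_i)/X_F(K_{n_i})$ and to establish the three claims (unbiasedness, the covariance formula, asymptotic normality) separately in that order. Since $X_F(K_{n_i})$ depends only on $n_i$ and is therefore deterministic, no random denominators arise, and each $Y_{i,F}$ takes values in $[0,1]$ because $0\le X_F(G_i)\le X_F(K_{n_i})$. Unbiasedness is then immediate: the first-moment computation preceding the theorem (Proposition~\ref{first-moms}) gives $\E X_F(G_i)=X_F(K_{n_i})\mu_F(f)$, whence $\E Y_{i,F}=\mu_F(f)$ and $\E\hat{\mu}_F(\G)=\mu_F(f)$ for every $N$.

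The heart of the argument is the covariance, and the key is to apply Lemma~\ref{LinearityLemma} twice. First, for a fixed $G_i\sim G_{n_i}(f)$, expanding $X_F(G_i)X_{F'}(G_i)=\sum_{H\in\mathcal{H}_{FF'}}c_H X_H(G_i)$ and taking expectations via Proposition~\ref{first-moms} yields $\E[X_F(G_i)X_{F'}(G_i)]=\sum_H c_H X_H(K_{n_i})\mu_H(f)$. Second, and crucially, the \emph{same} lemma applied to the deterministic graph $K_{n_i}$ gives the identity $X_F(K_{n_i})X_{F'}(K_{n_i})=\sum_H c_H X_H(K_{n_i})$. Subtracting $\E X_F(G_i)\,\E X_{F'}(G_i)=\mu_F(f)\mu_{F'}(f)\,X_F(K_{n_i})X_{F'}(K_{n_i})$, and using this identity to rewrite the product of denominators as $\sum_H c_H X_H(K_{n_i})$, the two sums combine termwise into
\[
\cov\big(X_F(G_i),X_{F'}(G_i)\big)=\sum_{H\in\mathcal{H}_{FF'}}c_H X_H(K_{n_i})\big(\mu_H(f)-\mu_F(f)\mu_{F'}(f)\big).
\]
Because moments factor over disjoint unions, $\mu_{F\sqcup F'}(f)=\mu_F(f)\mu_{F'}(f)$, so the $H=F\sqcup F'$ term vanishes and may be dropped -- this is exactly the exclusion appearing in~\eqref{cov}. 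Dividing by $X_F(K_{n_i})X_{F'}(K_{n_i})$, averaging over the independent observations $i\in[N]$, and rescaling then produces the stated covariance with weights $\omega_H(n;N)$. The hypothesis $2\max_{F\in\F}|F|\le\min_i n_i$ guarantees that every $H\in\mathcal{H}_{FF'}$ (in particular $F\sqcup F'$) embeds in $K_{n_i}$, so each term above is well defined.

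For asymptotic normality I would invoke the Cram\'er--Wold device: it suffices to show that for each fixed $t=(t_F)_{F\in\F}$ the scalar $\frac{1}{\sqrt N}\sum_{i=1}^N (Z_i-\E Z_i)$, with $Z_i:=\sum_{F\in\F}t_F Y_{i,F}$, converges to a centered normal of variance $t^\top\Sigma_\F(n,f)\,t$. The $Z_i$ are independent but not identically distributed (their laws depend on $n_i$), which is precisely the setting of the Lindeberg--Feller/Lyapunov central limit theorem; writing $s_N^2=\sum_i\var(Z_i)$, the variance of the scaled sum equals $s_N^2/N$, which is computed exactly by the covariance formula just established and converges to $t^\top\Sigma_\F(n,f)\,t$. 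The decisive simplification is that each $Z_i$ is bounded by $\sum_F|t_F|$, so with $\delta=1$ Lyapunov's ratio obeys $s_N^{-3}\sum_i\E|Z_i-\E Z_i|^3\le C\,s_N^{-3}\sum_i\var(Z_i)=C/s_N$, which tends to $0$ as soon as $s_N\to\infty$.

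The main obstacle is therefore not the tails -- boundedness of subgraph counts disposes of those -- but ensuring the normalization is non-degenerate, i.e.\ that $s_N\to\infty$. When the $n_i$ stay finite each $\var(Z_i)$ is bounded, and away from the kernel directions of the covariance it is bounded below by a positive constant, so $s_N^2\asymp N$ and the Lyapunov limit applies. For a direction $t$ in which the limiting variance $t^\top\Sigma_\F(n,f)\,t$ is zero, Chebyshev's inequality instead forces $\frac{1}{\sqrt N}\sum_i(Z_i-\E Z_i)\to 0$ in probability, matching the degenerate normal $\mathrm{Normal}(0,0)$; hence the one-dimensional conclusion holds for every $t$, allowing $\Sigma_\F(n,f)$ to be singular. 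Writing $\sqrt N\big(\hat{\mu}_\F(\G)-\mu_\F(f)\big)=\tfrac{s_N}{\sqrt N}\cdot s_N^{-1}\sum_i(Z_i-\E Z_i)$ and collecting the directions through Cram\'er--Wold then yields the multivariate normal limit with the stated covariance.
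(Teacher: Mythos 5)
Your high-level strategy matches the paper's: fix a direction (Cram\'er--Wold), compute the variance of the projected sum via a per-graph covariance formula, and apply a CLT for independent, non-identically distributed, \emph{bounded} summands, with boundedness of $X_F(G_i)/X_F(K_{n_i})\in[0,1]$ doing all the work in the tail condition (you use Lyapunov with $\delta=1$; the paper checks the Lindeberg--Feller condition directly by noting the truncation indicators vanish for $N$ large --- the same mechanism). Where you genuinely diverge is the covariance step. The paper's Proposition~\ref{first-moms} expands $\cov(X_F(G),X_{F'}(G))$ over pairs of copies in $K_G$ and splits according to whether the two copies are vertex-disjoint (these pairs contribute zero by independence) or overlapping (these are regrouped into $\mathcal{H}^\ast_{FF'}$). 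You instead apply Lemma~\ref{LinearityLemma} twice --- once to $G_i$, giving $\E[X_F(G_i)X_{F'}(G_i)]=\sum_{H}c_HX_H(K_{n_i})\mu_H(f)$, and once to the complete graph itself, giving $X_F(K_{n_i})X_{F'}(K_{n_i})=\sum_H c_HX_H(K_{n_i})$ --- and then delete the $H=F\sqcup F'$ term via the factorization $\mu_{F\sqcup F'}(f)=\mu_F(f)\mu_{F'}(f)$. This is correct, arguably cleaner, and makes transparent exactly why the disjoint union is excluded in~\eqref{cov}. Your explicit treatment of degenerate directions ($t^\top\Sigma_\F(n,f)t=0$, handled by Chebyshev, matching a $\mathrm{Normal}(0,0)$ limit) is also a case the paper passes over silently.

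There is, however, one genuine gap: you assert that $s_N^2/N$ ``converges to $t^\top\Sigma_\F(n,f)\,t$,'' but the existence of that limit is itself part of the theorem's claim (``there exists $\Sigma_\F(n,f)$''), and it is not automatic. What must converge are the Ces\`aro averages $\omega_H(n;N)=\frac1N\sum_{i\le N}c_HX_H(K_{n_i})/\big(X_F(K_{n_i})X_{F'}(K_{n_i})\big)$. The sequence $n$ is arbitrary apart from the lower bound $n_i\ge 2\max_{F\in\F}|F|$, and a bounded sequence need not have convergent Ces\`aro averages: if $n_i$ alternates between two fixed values on blocks of geometrically growing length, the ratios $X_H(K_{n_i})/\big(X_F(K_{n_i})X_{F'}(K_{n_i})\big)$ take two distinct values and $\omega_H(n;N)$ oscillates without converging. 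As written, your argument therefore delivers the normal limit only along subsequences where these averages converge; your later claim that in non-degenerate directions the per-observation variances are ``bounded below by a positive constant'' is likewise unsubstantiated (and unnecessary once convergence of $s_N^2/N$ to a positive limit is in hand). The paper devotes a dedicated step (``Convergence of $\omega_H(n,f;N)$'') to precisely this point, arguing the sequence is Cauchy from the increment bound $|\omega_H(n,f;N+1)-\omega_H(n,f;N)|\le C/(N+1)$. That argument is itself shaky --- increments of order $1/N$ do not imply the Cauchy property --- which only underscores that this is the delicate part of the theorem: your proof is silent exactly where the work (or an additional hypothesis on $n$) is required.
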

Crucial to the following is the covariance matrix ${\bm\Sigma}_\F(n,f)$---which can be obtained by taking the limit in $N$ in~\eqref{cov} for each $F,F'\in\F$---and which will enable the computation of confidence regions. Interestingly, its elicitation is more involved than for the study of large networks, where only a few terms dominate. We refer to the Appendix for the proof. The Appendix's proof relies on understanding the moments of single network counts, which have been studied extensively in the Erd\H{o}s--R\'enyi and the exchangeable cases, see for example~\cite{rucinski1988small,coulson2016poisson}.

\begin{Remark}[Non-random latent positions]
Because our proof build on the foundation of~\cite{rucinski1988small}, our results extend to the case where the latent $x_i$-s are not random; say fixed to some values. The only modification would be that the sum in~\eqref{cov} should be restricted to graphs $H$ formed by copies of $F$ and $F'$ overlapping over at least an edge (as opposed to overlapping over one node in~\eqref{cov}.) This also means that all our methods apply to that case, with the caveat that the resulting test will be conservative (as the variance is overestimated.) 
\end{Remark}

\begin{figure}[t]
\includegraphics[width=\textwidth]{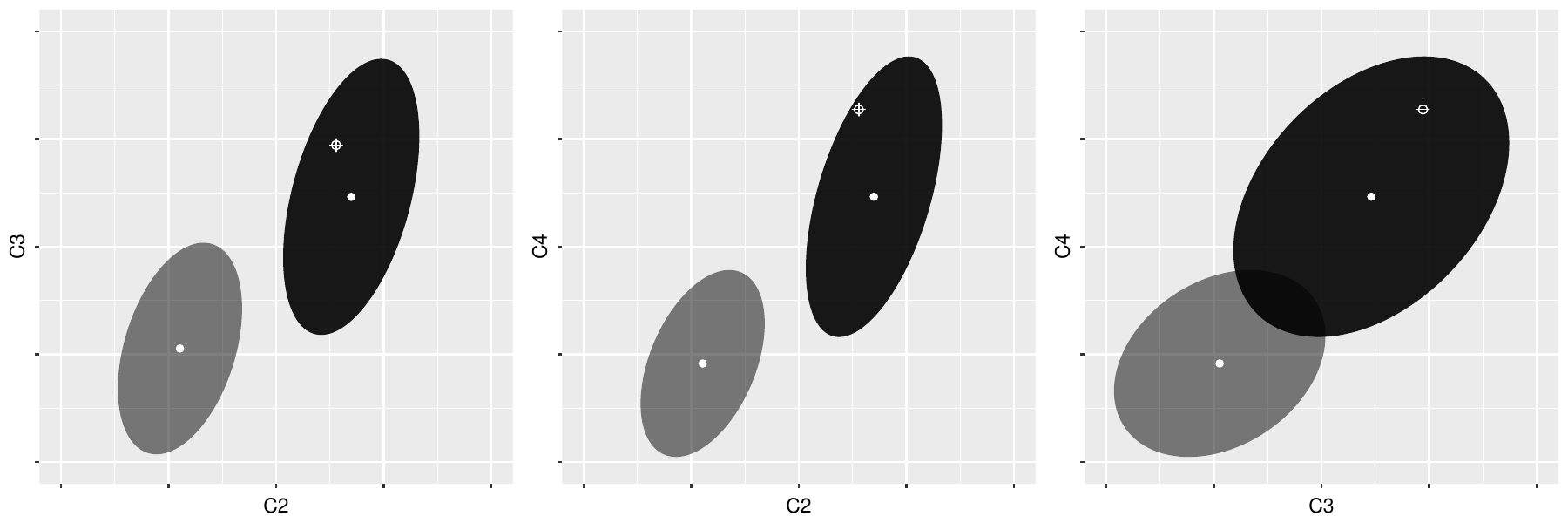}
\vspace{-2\baselineskip}
\caption{Testing for a kernel using $\F = \{C_2,C_3,C_4\}=\{\protect\Textedge\ \,,\protect\Texttriangle\ ,\protect\Textsquare\ \}$. The sample $\G$ is such that: $N=100$, $n_i$ is the $i$-th digit of $\pi$ plus $30$, $G_i$ is drawn from a kernel $f$ ($G_i\sim G_{n_i}(f)$). The estimate $\hat{\bm\mu}_\F(\G)$ is denoted by a white cross. Overlaid are the expected densities (white dots) and the confidence ellipse (shaded area) for two alternative kernels $f_a$ and $f_b$.
}
\label{Example1}
\end{figure}

\Cref{lyapu} enables testing against the null that all $G_i$ are drawn from a given kernel. Further, the Appendix contains a simulation experiment exploring convergence in small sample. To make this concrete, we consider an example in \cref{Example1}. There, we observe a graph sample $\G=(G_1,\dots,G_{100})$, and aim to compare it to two kernels $f_a$ (in black) and $f_b$ (in gray) using \cref{lyapu}; i.e., we assume that for $i\in[N]$, $G_i\sim G_{n_i}(f)$ and consider the null hypothesis $H_0: f=f_a$ and the alternative $H_1: f=f_b$. We draw as a white cross $\hat{\bm\mu}_\F(\G)$. The sizes of the networks in $\G$, the $n_i$, are non-random but not constant. We achieve this by using the sequence of digits of $\pi$.

First, since we have specified $f_a$ and $f_b$, we can evaluate both ${\bm\mu}_\F(f_a)$ and ${\bm\mu}_\F(f_b)$ and draw them on the figure (as white dots). Then, since $n=(n_i)_{i\leq N}$ is observed, we can compute ${\bm\Sigma}_\F(n,f_a)$ and ${\bm\Sigma}_\F(n,f_b)$ using \cref{lyapu}, which allows us to compute the confidence ellipse around ${\bm\mu}_\F(f_a)$ and ${\bm\mu}_\F(f_b)$ (in shaded black and gray respectively). Finally, since we know the limit distribution and covariance under the null, we can use ${\bm\Sigma}_\F(n,f_a)$ and ${\bm\mu}_\F(f_a)$ to compute a $p$-value using Mahalanobis distance; e.g., assuming ${\bm\Sigma}_\F(f_a)$ is full rank, the $p$-value is $1-F_{\chi^2_{|\F|}}\big((\hat{\bm\mu}_\F(\G)-{\bm\mu}_\F(f_a))'{\bm\Sigma}_\F(f_a)^{-1}(\hat{\bm\mu}_\F(\G)-{\bm\mu}_\F(f_a))\big)$, with $F_{\chi^2_{|\F|}}$ the cumulative distribution function of the $\chi^2$ distribution with $|\F|$ degrees of freedom.

\cref{Example1} is useful to understand the power of the proposed test. We see that since all the confidence ellipses do not all overlap, the power is larger than $.95$. However, if we were to use only \Texttriangle\ and \Textsquare, then the power would be less as the ellipses overlap. For larger sample sizes, the radii of both ellipses will be smaller, so that eventually the power tends to $1$ using any combination of subgraphs. In the Supplementary Material we explore the power of our test when the expected \Textedge\ count is held fixed, but the number of blocks in the null and true models differ.

In the following we consider the case where instead of testing against the null of a single kernel, we test against the null of a kernel class.

\section{The general case: Flexible sampling design}
Here we expand our results to cases where the observed networks may be generated from different kernels. Indeed, in many settings, the sampling mechanism may distort the structure of the underlying kernel; e.g., although the network connecting brain regions can be satisfactorily modeled by a blockmodel~\citep{koutra2013deltacon}, the proportion of nodes of each block may be different in different experimental settings, so that each observation is drawn from a different blockmodel.

In this practically important and conceptually challenging new setting, the proof techniques developed for \cref{lyapu} yield the following.
\begin{Theorem}\label{lyapu2}
Fix a set of graphs $\F$, a sequence of kernels $f=(f_i)_{i\in\mathbb{N}}$ and a sequence of integers $n=(n_i)_{i\in\mathbb{N}}$ such that $2\max_{F\in\F}|F|\leq\min_{i\in\mathbb{N}}n_i$. Let $\G=(G_i,\dots,G_N)$ be a network sample such that for all $i$, $G_i\sim G_{n_i}(f_i)$. Set $\hat{\bm\mu}_\F(\G)=N^{-1}\sum_{G\in\G}{(X_F(G)}/{X_F(K_G)})_{F\in\F}$ and ${\bm\mu}_\F(f;N)=N^{-1}\sum_{i=1}^N{\bm\mu}_\F(f_i)$. Then, asymptotically in $N$, and for some ${\bm\Sigma}_\F^\ast(n,f)$, we have
\[
\sqrt{N}\big(\hat{\bm\mu}_\F(\G)-{\bm\mu}_\F(f;N)\big) \to \mathrm{Normal}\big(0,{\bm\Sigma}_\F^\ast(n,f)\big).
\]
\end{Theorem}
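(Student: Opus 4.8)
The plan is to recast the claim as a central limit theorem for a triangular array of independent, non-identically distributed, uniformly bounded random vectors, and then to invoke the Lindeberg--Feller theorem together with the Cram\'er--Wold device, exactly as in the proof of Theorem~\ref{lyapu} but without appealing to identical distribution. Write $Y_i = (X_F(G_i)/X_F(K_{G_i}))_{F\in\F}$, so that $\hat{\bf\mu}_\F(\G)=N^{-1}\sum_{i=1}^N Y_i$ and $\sqrt N\big(\hat{\bf\mu}_\F(\G)-{\bf\mu}_\F(f;N)\big) = N^{-1/2}\sum_{i=1}^N (Y_i - \E Y_i)$. The $Y_i$ are independent because the $G_i$ are, each entry lies in $[0,1]$, and the first- and second-moment computations preceding Theorem~\ref{lyapu} are carried out one observation at a time; they therefore apply verbatim to $G_i\sim G_{n_i}(f_i)$ and yield $\E Y_i = {\bf\mu}_\F(f_i)$ (so the centering ${\bf\mu}_\F(f;N)=N^{-1}\sum_i{\bf\mu}_\F(f_i)$ is exactly $\E\hat{\bf\mu}_\F(\G)$) together with the per-observation covariance
\[
\cov\big((Y_i)_F,(Y_i)_{F'}\big)=\sum_{H\in\mathcal{H}_{FF'}\setminus\{F\sqcup F'\}}\frac{c_H X_H(K_{n_i})}{X_F(K_{n_i})X_{F'}(K_{n_i})}\big(\mu_H(f_i)-\mu_F(f_i)\mu_{F'}(f_i)\big),
\]
where, as in~\eqref{cov}, the disjoint-union term cancels because $\mu_{F\sqcup F'}(f_i)=\mu_F(f_i)\mu_{F'}(f_i)$ and, applying Lemma~\ref{LinearityLemma} to $G=K_{n_i}$, $\sum_H c_H X_H(K_{n_i})=X_F(K_{n_i})X_{F'}(K_{n_i})$.

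Next I would fix $t\in\mathbb{R}^{|\F|}$ and reduce to the scalar array $W_{N,i}=N^{-1/2}\,t^\top(Y_i-\E Y_i)$, which is row-independent with mean zero. Its row-variance sum is $\sum_i \var(W_{N,i}) = N^{-1}\sum_i t^\top\cov(Y_i)\,t$, a Ces\`aro average of the per-observation covariances displayed above; I would then take $\Sigma^\ast_\F(n,f) := \lim_N N^{-1}\sum_{i=1}^N \cov(Y_i)$ as the limiting covariance, so that the row-variance sum converges to $t^\top\Sigma^\ast_\F(n,f)\,t$. The Lindeberg condition is immediate from boundedness: since $|t^\top(Y_i-\E Y_i)|\le \|t\|_1$, we have $|W_{N,i}|\le \|t\|_1\,N^{-1/2}\to 0$ uniformly in $i$, so for every $\varepsilon>0$ the indicators $\mathbf 1_{\{|W_{N,i}|>\varepsilon\}}$ vanish for all large $N$ and the Lindeberg sum is eventually identically zero. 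The Lindeberg--Feller theorem then gives $\sum_i W_{N,i}\to\mathrm{Normal}\big(0,\,t^\top\Sigma^\ast_\F(n,f)\,t\big)$, and the Cram\'er--Wold device upgrades this to the stated joint convergence with covariance $\Sigma^\ast_\F(n,f)$.

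The routine work---rederiving $\E Y_i$ and $\cov(Y_i)$ via Lemma~\ref{LinearityLemma}, and bounding $X_H(K_{n_i})/\big(X_F(K_{n_i})X_{F'}(K_{n_i})\big)=O(n_i^{-1})$ for $H\neq F\sqcup F'$ (as $|H|\le|F|+|F'|-1$)---carries over unchanged from Theorem~\ref{lyapu}. The one genuinely new issue, and the main obstacle, is that dropping identical distribution precludes the classical i.i.d.\ CLT and forces the triangular-array formulation; in particular the limiting covariance is now a Ces\`aro limit of heterogeneous per-observation covariances, which need not converge for arbitrary sequences $(f_i)$. The clean statement with a fixed $\Sigma^\ast_\F(n,f)$ therefore rests on the existence of this limit (equivalently, on convergence of $N^{-1}\sum_i\mu_H(f_i)$ for each relevant $H$); absent it, the identical argument still yields asymptotic normality after self-normalization by the finite-sample covariance $N^{-1}\sum_i\cov(Y_i)$, since boundedness of the $Y_i$ secures the Lindeberg condition regardless. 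The degenerate direction $t^\top\Sigma^\ast_\F(n,f)\,t=0$ is also covered automatically, as Chebyshev's inequality then forces $\sum_i W_{N,i}\to 0$ in probability.
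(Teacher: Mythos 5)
Your proposal is correct and follows the same architecture as the paper's own proof: a triangular-array Lindeberg--Feller argument combined with the Cram\'er--Wold device, with the per-observation mean and covariance taken from Proposition~\ref{first-moms} (which, as you note, is stated for a single $G\sim G(|G|,f)$ and therefore applies verbatim to heterogeneous $f_i$), and with the Lindeberg condition discharged by uniform boundedness of the normalized counts, exactly as the paper does. The one place where you genuinely depart from the paper is the treatment of the limiting covariance, and there your version is the more careful one. The paper forms the same Ces\`aro averages (its $\omega_H(n,f;N)$, summed over $H\in\mathcal{H}_{FF'}^\ast$) and claims they converge by showing $|\omega_H(n,f;N+1)-\omega_H(n,f;N)|\leq C/(N+1)$, concluding that the sequence ``is Cauchy.'' That inference is invalid: consecutive differences of order $1/N$ do not imply convergence, and indeed the running average of a bounded heterogeneous sequence can oscillate forever (take $(f_i)$ alternating between two graphons over geometrically growing blocks of indices), so the existence of $\Sigma^\ast_\F(n,f)$ is not actually established there. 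You instead flag the existence of the Ces\`aro limit as a genuine hypothesis on $(n_i,f_i)$, with self-normalization by the finite-$N$ covariance as the fallback when it fails---which is the honest resolution. You also handle the degenerate direction $t^\top\Sigma^\ast_\F(n,f)\,t=0$ explicitly via Chebyshev, whereas the paper's Lindeberg step tacitly assumes $Ns_N\to\infty$, i.e.\ a nondegenerate limit variance. In short: same proof strategy, but your write-up is more rigorous precisely at the step where the paper's argument is flawed.
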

Therefore, even in this much more flexible setting, we can recover the barycenter of the ${\bm\mu}_\F(f_i)$. However, the variance has now a more complex structure (see Appendix for details).

Following the intuition of our example of brain networks, and to make the usefulness of \cref{lyapu2} concrete, we introduce the \emph{flexible stochastic blockmodel} (FSBm).
\begin{Definition}[FSBm and embedding shape]
\label{FSBm}
For a symmetric matrix ${\bm B}\in[0,1]^{K\times K}$ we call $\D({\bm B})$ the set of all possible kernels with the same block structure as ${\bm B}$; i.e., recalling that $\mathcal{P}_K([0,1])$ is the set of partitions of $[0,1]$ in $K$ sets,
\[
\D({\bm B}) = \left\{f:\exists (P_1,\dots,P_K)\in\mathcal{P}_K([0,1]) \,s.t.\, \forall x\in P_s,y\in P_t, f(x,y)={\bm B}_{st}\right\}.
\]
For a set $\F$ of graphs, we call embedding shape the set
$
\mu_\F({\bm B}) = \left\{\mu_\F(f)\text{ for }f\in \D({\bm B})\right\}.
$
\end{Definition}
For instance, with ${\bm B}\in[0,1]^{2\times2}$ and $\F = \{\,\Textedge\ \,,\,\Texttriangle\ \}$, then:
\begin{multline*}
\mu_\F({\bm B}) = \left\{ \left(
	\pi^2B_{11}+2\pi(1-\pi)B_{12}+(1-\pi)^2B_{22},\right.\right.\\\left.\left.
	\pi^3B_{11}^3+3\pi^2(1-\pi)B_{11}B_{12}^2+3\pi(1-\pi)^2B_{22}B_{12}^2+(1-\pi)^3B_{22}^3
	\right)\,:\,\pi\in[0,1]\right\}.
\end{multline*}

The most direct way of using the FSBm is to test for all the $f_i$ being equal to any blockmodel instance in a class; i.e assume that all $G_i$-s are drawn from a kernel $f$ and test for the null $H_0: f\in\D({\bm B})$. This is achieved by using a composite hypothesis test, and our results allow us to produce confidence regions and $p$-values using the same tools as before.

\begin{figure}[t]
\centering
\includegraphics[width=\textwidth]{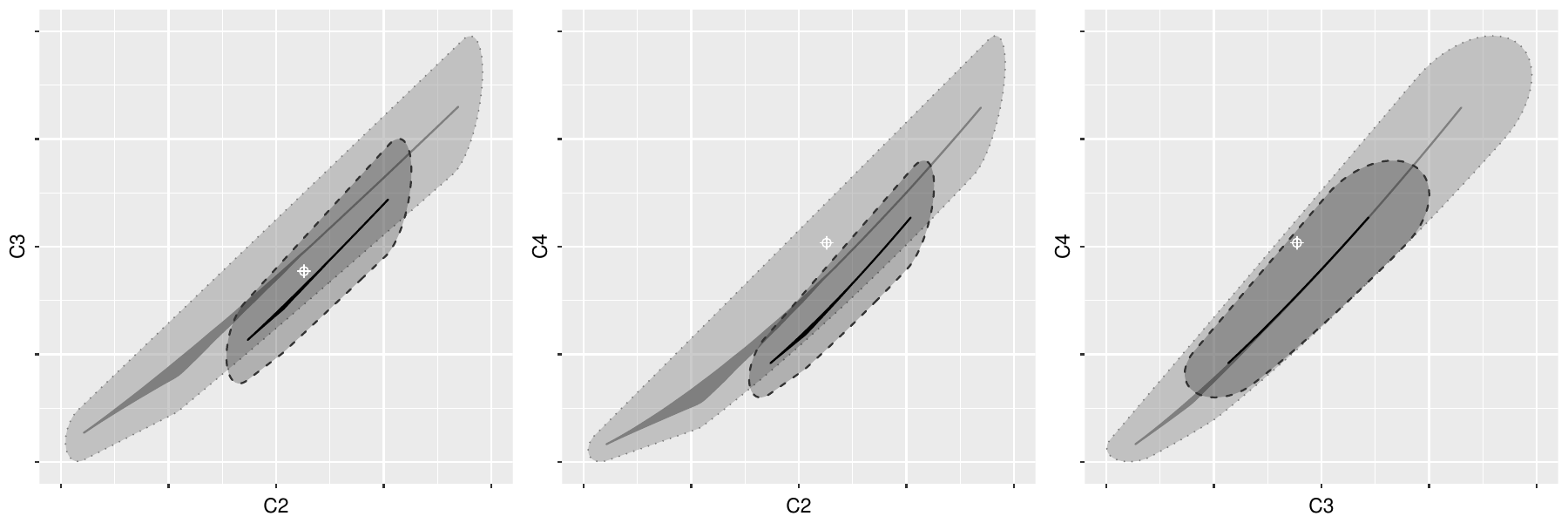}
\vspace{-2\baselineskip}
\caption{Testing for a FSBm class. The sample $\G$ is such that: $N=200$, $n_i$ is the $i$-th digit of $\pi$ plus $30$, $G_i$ is drawn from a kernel $f$ ($G_i\sim G_{n_i}(f)$). With $\F = \{C_2,C_3,C_4\}$, we estimate ${\hat{\bm\mu}}_\F(\G)$, and plot it as a white cross. Then, we draw in solid color the embedding shapes $\mu_\F({\bm B}_a)$ and $\mu_\F({\bm B}_b)$. In shaded color we draw the associated confidence regions; $p$-values can be obtained using the Mahalanobis distance associated with the closest point to ${\hat{\bm\mu}}_\F(\G)$ in $\mu_\F({\bm B}_a)$ and $\mu_\F({\bm B}_b)$.}
\label{Example2}
\end{figure}

We present such an example in \cref{Example2}. There, we observe $\G=(G_1,\dots,G_{200})$, and consider two FSBm classes generated from ${\bm B}_a$ (in gray) and ${\bm B}_b$ (in black). Then, we assume that all networks in the sample are drawn from a kernel $f$ and test for the null $H_0: f\in\D({\bm B}_a)$ and the alternative $H_1: f\in\D({\bm B}_b)$. We first represent ${\bm\mu}_\F(\G)$ as a white cross. Using \cref{FSBm}, we plot the em\-bedding shapes $\mu_\F({\bm B}_a)$ and $\mu_\F({\bm B}_b)$ in solid gray and black respectively. The confidence regions (in shaded gray with dotted contour and black with dashed contour) are the union of the confidence ellipses at all points in $\mu_\F({\bm B}_a)$ and $\mu_\F({\bm B}_b)$.

A more general use of \cref{lyapu2} is to test for all graphs in a sample being drawn from elements of a FSBm class; i.e assume that the $G_i$-s are drawn from the $f_i$-s and test for the null $H_0: \forall i\in[N], f_i\in\D({\bm B})$ for some ${\bm B}$. As before, we face a composite null, and we may compute the confidence region and the $p$-value by scanning all possible sequences $f$. This, however, is clearly computationally intractable. Nonetheless, the form of the variance and the structure of the FSBm allows us to propose conservative confidence regions and $p$-values that can be efficiently computed (we fully describe the method in the Supplementary Information).

\begin{figure}[t]
\centering
\includegraphics[width=\textwidth]{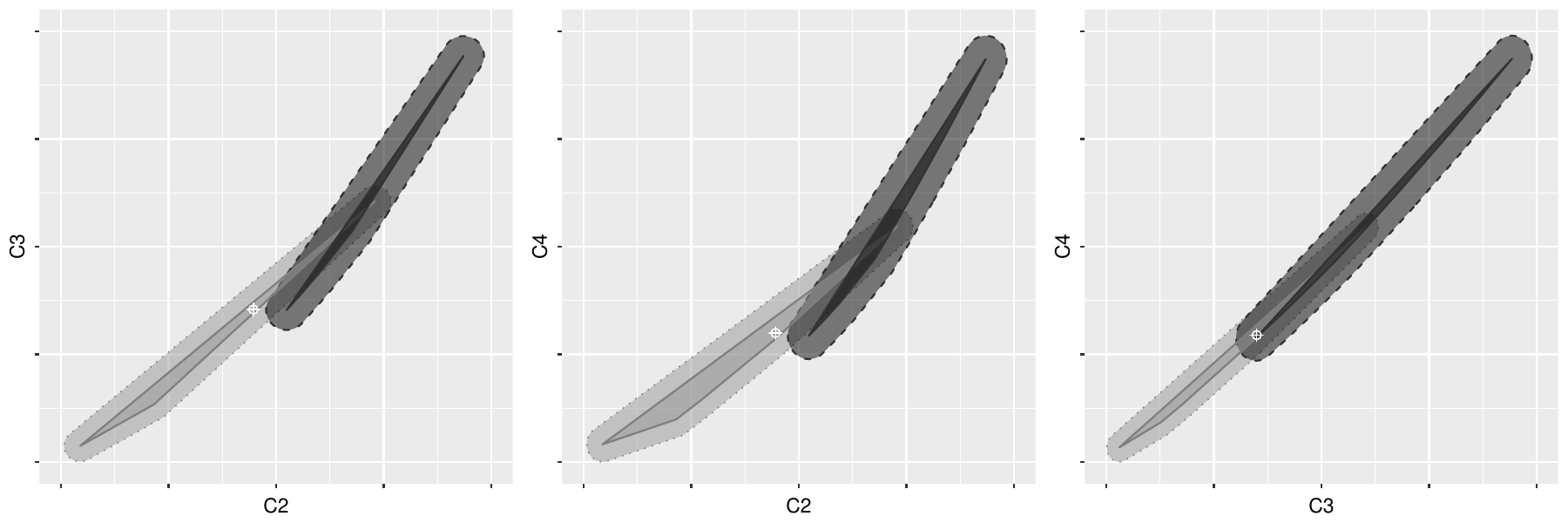}
\vspace{-2\baselineskip}
\caption{Testing for a full FSBm class. The sample $\G$ is such that: $N=10^3$, $n_i$ is the $i$-th digit of $\pi$ plus $50$, $G_i$ is drawn from a kernel $f_i$ ($G_i\sim G_{n_i}(f_i)$). With $\F = \{C_2,C_3,C_4\}$, we estimate ${\hat\mu}_\F(\G)$, and plot it as a white cross. Then, we draw in solid color the convex hulls of the embedding shapes $\mu_\F({\bm B}_a)$ and $\mu_\F({\bm B}_b)$. In shaded color we draw the associated confidence regions; approximate (and conservative) $p$-values can be obtained by determining the confidence level at which the observation ceases to be in the confidence region.
}
\label{Example3}
\end{figure}

We present an example in \cref{Example3}. There we observe $\G=(G_1,\dots,G_{10^3})$ and consider two FSBm classes generated by ${\bm B}_a$ and ${\bm B}_b$. We first plot ${\bm{\bm\mu}}_\F(\G)$ as a white cross. Then, using \cref{FSBm}, we plot the convex hull of the embedding shapes $\mu_\F({\bm B}_a)$ and $\mu_\F({\bm B}_b)$ (in solid gray and black respectively) wherein---by \cref{lyapu2}---${\bm\mu}_\F(f;N)$ must lie. Finally, we use a method described in the SI to produce the confidence region around each shape (in shaded hue).

\section{Are creative brains different from less creative ones?}
We now consider a sample of brain networks $\G = (G_1,\dots,G_{113})$~\citep{koutra2013deltacon}. This sample was produced as follows: diffusion MRI (dMRI) and structural MRI (sMRI) scans from $113$ individuals were collected over $2$ sessions from Beijing Normal University~\citep{corr}. Graphs were estimated using the {\tt ndmg} \citep{ndmg} pipeline. The dMRI scans were pre-processed for eddy currents using FSL's {\tt eddy-correct} \citep{fsl1}. FSL's ``standard'' linear registration pipeline was used to register the sMRI and dMRI images to the MNI152 atlas~\citep{fsl2,fsl3,mni152}. A tensor model is fit using DiPy~\citep{dipy} to obtain an estimated tensor at each voxel. A deterministic tractography algorithm is applied using DiPy's EuDX~\citep{eudx} to obtain streamlines, which indicate the voxels connected by an axonal fiber tract. Simple graphs are formed by first contracting voxels into graph vertices according to the Desikan parcellation~\citep{desikan} and then by placing an edge between two regions if a fiber tract is observed between any pair of voxel from each $70$ regions. Further, we have available a covariate $C = (c_1, \dots , c_{113})$ measuring subject's creativity, which is related to the person's performance on a series of creativity tasks.

To study this network sample and use the covariate $C$, we introduce a direct extension of our results to compare two network samples:
\begin{Corollary}[Two-sample test]\label{coro}
Fix a set of subgraphs $\F$ and two network samples $\G$ and $\G'$ generated respectively by the kernels $f$ and $f'$ and the network size sequences $n$ and $n'$. Then, as both $|\G|$ and $|\G'|$ tend to infinity, and if $\min(n,n')\geq2\max_{F\in\mathcal{F}}|F|$, we have that if $f=f'$, then
\[
\frac{\sqrt{|\G||\G'|}}{\sqrt{|\G|+|\G'|}}\big(\hat{\bm\mu}_\F(\G)-\hat{\bm\mu}_\F(\G')\big)\to\mathrm{Normal}\big(0,{\bm\Sigma}_\F(n,n',f)\big),
\]
and, as~\eqref{cov} applies, we can produce an unbiassed $\sqrt{|\G|+|\G'|}$-consistent and asymptotically normal estimator of $\mathrm{vec}\big({\bm\Sigma}_\F(n,n',f)\big)$, this without specifying or estimating $f$.
\end{Corollary}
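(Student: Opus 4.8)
The plan is to obtain the Corollary directly from Theorem~\ref{lyapu} by a standard two-sample argument that exploits the independence of the two samples. Write $N=|\G|$ and $M=|\G'|$, and set $\mu={\bf\mu}_\F(f)={\bf\mu}_\F(f')$, the two means coinciding under the null $f=f'$. The hypotheses of Theorem~\ref{lyapu} hold for each sample separately, since $\min(n,n')\ge 2\max_{F\in\F}|F|$, so applying it twice gives two \emph{independent} limits
\[
A_N:=\sqrt N\big(\hat{\bf\mu}_\F(\G)-\mu\big)\to\mathrm{Normal}\big(0,\Sigma_\F(n,f)\big),\qquad
B_M:=\sqrt M\big(\hat{\bf\mu}_\F(\G')-\mu\big)\to\mathrm{Normal}\big(0,\Sigma_\F(n',f)\big).
\]

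Next I would rewrite the target statistic as a linear combination of $A_N$ and $B_M$. Since $\hat{\bf\mu}_\F(\G)-\hat{\bf\mu}_\F(\G')=N^{-1/2}A_N-M^{-1/2}B_M$, multiplying by the stated scale gives
\[
\sqrt{\tfrac{NM}{N+M}}\big(\hat{\bf\mu}_\F(\G)-\hat{\bf\mu}_\F(\G')\big)=\sqrt{\tfrac{M}{N+M}}\,A_N-\sqrt{\tfrac{N}{N+M}}\,B_M,
\]
where the mean-zero centering is automatic because $\mu$ cancels. Assuming the samples grow at comparable rates, $N/(N+M)\to\lambda\in(0,1)$, the coefficients converge, and by independence of $A_N,B_M$ together with Slutsky's theorem the right-hand side converges to $\mathrm{Normal}(0,\Sigma_\F(n,n',f))$ with $\Sigma_\F(n,n',f)=(1-\lambda)\,\Sigma_\F(n,f)+\lambda\,\Sigma_\F(n',f)$. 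If one prefers not to assume that $N/(N+M)$ converges, the cleaner route is to treat the whole expression as a single sum of $N+M$ independent, uniformly bounded, mean-zero summands---the per-graph densities $X_F(G)/X_F(K_G)$ lie in $[0,1]$---and rerun the Lindeberg--Feller plus Cramér--Wold argument of Theorem~\ref{lyapu} verbatim; the Lindeberg condition is immediate from boundedness, because every coefficient is $O(\min(N,M)^{-1/2})$ while the sum of their squares equals the (bounded) variance.

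For the final claim---that $\Sigma_\F(n,n',f)$ is estimable at rate $\sqrt{N+M}$---I would note that, by~\eqref{cov}, each entry of the combined covariance is a fixed functional, with weights $\omega_H$ determined entirely by the observed sizes $n,n'$, of the graphon moments $\{\mu_H(f)\}_{H\in\mathcal{H}_{FF'}}$ together with $\mu_F(f),\mu_{F'}(f)$. Replacing each such moment by its pooled empirical density gives a plug-in estimator. The size condition $\min(n,n')\ge 2\max_{F\in\F}|F|$ is exactly what guarantees that every overlap graph $H$, which has at most $2\max_{F}|F|$ nodes, fits inside the observed graphs and is therefore estimable. Each pooled moment estimate is $\sqrt{N+M}$-consistent by Theorem~\ref{lyapu}, and since the covariance entries are finite polynomial combinations of these, the continuous-mapping theorem transfers the $\sqrt{N+M}$ rate to the plug-in $\widehat\Sigma_\F$.

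The main obstacle is bookkeeping rather than a deep difficulty: one must keep the two distinct variance scales $\Sigma_\F(n,f)$ and $\Sigma_\F(n',f)$ straight and correctly track the relative growth of $N$ and $M$ in the scaling $\sqrt{NM/(N+M)}$. The single-array Lindeberg--Feller route is the tidiest way to sidestep an explicit rate assumption, since it absorbs the relative-rate dependence directly into the definition of $\Sigma_\F(n,n',f)$, in exactly the same way that the variance in Theorem~\ref{lyapu} already absorbs the dependence on the size sequence $n$.
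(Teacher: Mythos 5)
Your proposal follows essentially the same route as the paper: two applications of Theorem~\ref{lyapu}, the rescaling by $\sqrt{|\G'|/(|\G|+|\G'|)}$ and $\sqrt{|\G|/(|\G|+|\G'|)}$, independence of the two samples to combine the Gaussian limits, and a pooled plug-in estimator with Slutsky's theorem for the covariance. Your explicit handling of the relative growth of the two sample sizes (assuming $N/(N+M)\to\lambda$, or alternatively rerunning Lindeberg--Feller on the pooled array) is in fact slightly more careful than the paper, which simply writes the limiting convex combination of $\Sigma_\F(n,f)$ and $\Sigma_\F(n',f)$ and implicitly assumes it exists.
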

In the following we choose to work with $\F = \{\Textedge\ \,,\Texttriangle\ ,\Textsquare\ \}$. We chose these for several reasons: in a simulation experiment (see Supplementary Material) we find that using these counts leads to power surpassing the best available alternative in the literature~(\citet{tang2016nonpara}, \citet{tang2016semipara}); the associated counts correspond to spectral moments of the underlying kernel~\citep{maugis2017topo}; \Textedge\ and \Texttriangle are staples of graph analysis (related to conductance, clustering, transitivity, ...) and are key network summaries throughout the literature; \Textsquare\ is one of the smallest graphs that is larger than \Textedge\ and \Texttriangle but cannot be built from copies of \Textedge\ and \Texttriangle, which reduces correlation between counts, and therefore improve power of the tests; all three are small, and there is extensive and available software to efficiently count them (e.g.,~\citet{maugis2017fast,jha2015path,pinar2016escape}.)

We note that our estimator of ${\bm\Sigma}_\F(n,n',f)$ is entrywise Normal, not Wishart as in the classical setting. Thus, we have no guarantee that $\widehat{\bm\Sigma}_\F(n,n',f)$ is positive definite, and cannot use the Hotelling's $T$-squared distribution to compute $p$-values. If the estimate is positive definite, we recommend ignoring the variations in $\widehat{\bm\Sigma}_\F(n,n',f)$ and using the $\chi_{|\F|}^2$ distribution. If the estimate is not positive definite, we recommend using the marginals.

Before analyzing $\G$ using our results, we make the following test: we subsample uniformly at random and without replacement from $\G$, yielding $\G_1$ and $\G_2$ such that $\G_1\cup\G_2=\G$, and use \cref{coro} to test for $\G_1$ and $\G_2$ being drawn from the same kernel $f$. Unless $\G$ presents characteristics that cannot be explained by our results, $\G_1$ and $\G_2$ should be indistinguishable, and we expect to see $p$-values that are uniformly distributed in $[0,1]$.

We perform this experiment $100$ times, and obtain a sample of $p$-values for which we fail to reject the null of a uniform distribution using the Kolmogorov--Smirnov test ($D = 0.09$, $p$-value $=0.3$). For this test we use $\F=\{\Textsquare\ \}$ because of the small sample ($|\G_1|+|\G_2|=113$) size and a very high level of correlation; otherwise the correlation between the counts is so high that the correlation matrix appears singular.

We now use $C$ to split $\G$ in two samples. To do so, we choose to build a first subsample $\G_1$ containing the less creative, and a second subsample $\G_2$ containing the more creative. More precisely, for a quantile $q$ and denoting $Q_C$ the empirical quantile function of $C$:
\[
\G_1^q = \{G_i\in\G\,:\, c_i\leq Q_C(q)\}
\quad\&\quad
\G_2^q = \{G_i\in\G\,:\, c_i>Q_C(1-q)\}.
\]
Interestingly, for $q=0.5$ and $q=0.4$, we fail to reject the null that the networks in $\G_1^q$ and $\G_2^q$ come from the same kernel (see \cref{table} for the $p$-values). However, for $q=0.3$ we can reject the null of the same kernel at the $5\%$ confidence level using~\Texttriangle~or~\Textsquare\,, but not~\Textedge\,.

\begin{table}\centering
\renewcommand{\arraystretch}{.8}
\begin{tabular}{cccc}
\toprule
\multirow{2}{*}{Quantile ($q$)}
&\multicolumn{3}{c}{$p$-value} \\
\cmidrule(r){2-4}
& \multicolumn{1}{c}{\Textedge}
& \multicolumn{1}{c}{\Texttriangle}
& \multicolumn{1}{c}{\Textsquare} \\
\midrule
0.5 & 0.126 & 0.110 & 0.115\\
0.4 & 0.077 & 0.051 & 0.050\\
0.3 & 0.062 & 0.042 & 0.040\\
0.2 & 0.014 & 0.011 & 0.012\\
0.1 & 0.046 & 0.047 & 0.061\\
\bottomrule
\end{tabular}
\caption{Testing for differences between $\G_1^q$ and $\G_2^q$. For each $F\in\{\protect\Textedge\ \,,\,\protect\Texttriangle\ \,,\,\protect\Textsquare\ \,\}$ and $q\in\{0.1, 0.2, \dots, 0.5\}$ we produce the $p$-value for the null $H_0:\mu_F(\G_1^q)=\mu_F(\G_2^q)$. The $p$-values increase with $q$, except for $q=0.1$, in which case $|\G_1^q|$ and $|\G_2^q|$ are too small for the test to be significant.}\label{table}
\renewcommand{\arraystretch}{1}
\end{table}

Thus, we observe that individuals with a very high level of creativity present significantly more~\Texttriangle~and~\Textsquare~than those with a very low level of creativity. To further confirm this discovery, we undertake the following experiment: we estimate kernels (through the random-dot-product framework~\citep{athreya2018survey}) over the samples $\G$, $\G_1^{0.2}$, and $\G_2^{0.2}$, that we call $f$, $f_1$, and $f_2$ respectively; then, for several $\gamma\in[0,1]$, we consider the power of \cref{coro}'s test when the samples compared are of the same cardinality as $\G_1^{0.2}$, and $\G_2^{0.2}$, but drawn i.i.d. from $G_{70}\big(\gamma f_1 + (1-\gamma)f\big)$ and $G_{70}\big(\gamma f_2 + (1-\gamma)f\big)$ respectively; we find that the proposed test is very powerful, as powerful than a semi-parametric test relying on the random-dot-product structure of the model~\citep{tang2016semipara}. This comforts the claim that the difference in sample is only observed for $q\geq .3$, and not for more central quantiles.

We now aim to understand whether the added~\Texttriangle~and~\Textsquare~arise from a few edges completing partially present shapes or from fully new~\Texttriangle~and~\Textsquare. To do so, we first observe that if $G\sim G_n(f)$, then $\overline G\sim G_n(1-f)$, where $\overline G$ is the complement graph of $G$. Therefore, we may use our tests on $\overline G$, which can be understood as estimating ${\bm\mu}_\F(1-f)$ instead of ${\bm\mu}_\F(f)$ to compare network samples.

Then, using the $\overline\G_i^q = \{\overline G\,:\, G\in \G_i^q\}$, we can test whether there are significantly more fully absent subgraphs in $\G_1^q$ compared to $\G_2^q$. There, we find we cannot reject this null; i.e., we cannot reject the null of the networks in $\overline\G_1^q$ and $\overline\G_2^q$ coming from the same kernel for $q\geq0.3$. Therefore, we conclude that the added~\Texttriangle~and~\Textsquare~in the highly creative arise from a few edges completing partially present~\Texttriangle~and~\Textsquare.

One key conclusion of our analysis is that we must use~\cref{lyapu2} to study $\G$. Indeed, we have just shown that all networks generating $\G$ cannot come from the same kernel. This allows us to write, that using the full sample $\G$, a centered and consistent estimate of the average density of $\protect\Textedge\ \,,\,\protect\Texttriangle\ \,,\,\protect\Textsquare\ \,$ within human brains is $(0.41,0.13,0.06)$. To conclude, we remark that since all the $G_i$s in $\G$ have the same number of nodes, one could produce a similar analysis using---instead of our results---the ${\bm\mu}_\F(G_i)$ as if they formed an independent and identically distributed sample. However, as we have just shown, the $G_i$ are not identically distributed, and therefore such an analysis could lead to spurious conclusions.
\section{Discussion}
We provide the tools to perform statistical inference on a network sample using subgraph counts. Our two main results provide consistency and asymptotic normality of subgraph counts under very flexible conditions. Using these results, we show that subgraph counts are powerful statistics to test whether network samples come from a specified distribution, a specified model, or from the same model.

The key insight we provide is that statistical inference methods paralleling classical ones for standard samples may be obtained for network samples. From this perspective, our results may be seen as providing an analog of a multivariate $t$-test for network samples. However, going beyond what our results directly imply, we expect that parallels to ANOVA, model selection, model ranking, and goodness of fit may be obtained for network samples using our proof techniques.

However, since our tests are analog to the $t$-test, they rely on global summaries. Therefore, while we can reject the null of two network samples being drawn from the same model, our approach is unable to locate where in the network the difference is stemming from.
\appendix
\makeatletter
\renewcommand{\@seccntformat}[1]{
	APPENDIX~{\csname the#1\endcsname}:\hspace*{1em}}
\makeatother
\section{Properties of subgraph counts}
In the following we formalize certain notions we use loosely in the main body (especially the notion of copy and the sets $\mathcal{H}_{FF'}$, as well as the constants $c_H$), prove all our results, and provide more details on the numerical examples we present.

We start by proving our first lemma, establishing the linearity of subgraph counts. We first define $\mathcal{H}_{F_1F_2}$ and $c_H$, generalizing definitions given in~\cite{rucinski1988small}.

\begin{Definition}[Overlapping copies]\label{overlapping-copies}
For two graphs $F_1$ and $F_2$ we write $\mathcal{H}_{F_1F_2}$ the set of unlabeled graphs that can be formed by two copies of $F_1$ and $F_2$, and $c_H$ the number of ways a given $H\in\mathcal{H}_{F_1F_2}$ can be built from copies of $F_1$ and $F_2$:
\[\begin{dcases}
\mathcal{H}_{F_1F_2} &=
	\left\{H\subset K_{|F_1|+|F_2|}\,:
		\parbox{.42\textwidth}{\centering
			$\exists F'_1,F'_2\subset K_{|F_1|+|F_2|}$ s.t.,\\[.1\baselineskip]
			$F_1'\equiv F_1, F'_2\equiv F_2$,
			and $H = F_1'\cup F_2'$}\right\}/\equiv\\[.3\baselineskip]
c_H &= \#\left\{\left(F_1',F_2'\right)\subset H\,:
	\parbox{.33\textwidth}{\centering
		$F_1'\equiv F_1, F_2'\equiv F_2$\\[.1\baselineskip]
		and $H= F_1'\cup F_2'$}\right\}.
\end{dcases}\]
Finally, call $\mathcal{H}_{F_1F_2}^\ast$ the set $\mathcal{H}_{F_1F_2}$ removed of $F_1\sqcup F_2$, the vertex disjoint union of $F_1$ and $F_2$.
\end{Definition}

Note that $\mathcal{H}_{F_1F_2}$ is defined as a quotient set, sometimes called quotient space, through the equivalence relation $\equiv$. In the following we identify the equivalence classes in $\mathcal{H}_{F_1F_2}$ with any of their element, and therefore will treat the elements of $\mathcal{H}_{F_1F_2}$ as graphs.

\begin{Lemma}[Copies pairwise interaction]\label{copies-product}
Fix three graphs $F_1,F_2$ and $G$. Then,
\[
X_{F_1}(G)X_{F_2}(G) =
\sum_{H\in\mathcal{H}_{F_1F_2}}c_HX_H(G).
\]
\end{Lemma}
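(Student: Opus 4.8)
The plan is to interpret the product $X_{F_1}(G)X_{F_2}(G)$ combinatorially as counting ordered pairs of copies, and then to sort these pairs according to the unlabeled graph obtained by overlaying the two copies. First I would write
\[
X_{F_1}(G)X_{F_2}(G) = \#\big\{(F_1',F_2') : F_1',F_2'\subset G,\ F_1'\equiv F_1,\ F_2'\equiv F_2\big\},
\]
using that $X_{F_i}(G)$ is by definition the number of subgraphs of $G$ equivalent to $F_i$, so that the product counts precisely the ordered pairs of such subgraphs.

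Next I would associate to each such pair its union $H' := F_1'\cup F_2'$, meaning the subgraph of $G$ with vertex set $V(F_1')\cup V(F_2')$ and edge set $E(F_1')\cup E(F_2')$. Since $H'$ has at most $|F_1|+|F_2|$ vertices and is, by construction, built from one copy of $F_1$ and one of $F_2$, its equivalence class belongs to $\mathcal{H}_{F_1F_2}$ (after relabeling its vertices into $K_{|F_1|+|F_2|}$). This lets me partition the pairs according to the equivalence class of their union:
\[
X_{F_1}(G)X_{F_2}(G) = \sum_{H\in\mathcal{H}_{F_1F_2}} \#\big\{(F_1',F_2'): F_1'\equiv F_1,\ F_2'\equiv F_2,\ F_1'\cup F_2'\equiv H\big\}.
\]

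Then, for each fixed $H$, I would refine the count by first choosing the copy $H''\equiv H$ in $G$ that the pair produces, and then counting the decompositions of $H''$ into an ordered pair of the prescribed types. The number of copies $H''\subset G$ with $H''\equiv H$ is exactly $X_H(G)$; and for each such copy, the number of ordered pairs $(F_1',F_2')$ with $F_1'\equiv F_1$, $F_2'\equiv F_2$, and $F_1'\cup F_2'=H''$ is, by Definition~\ref{overlapping-copies}, the constant $c_H$. The key point to verify---and what I expect to be the crux---is that this decomposition count does not depend on which copy $H''$ is chosen: any isomorphism $\phi:H\to H''$ carries a decomposition of $H$ into one of $H''$ bijectively, so every copy of $H$ in $G$ admits exactly $c_H$ decompositions, matching the labeling-independent count in the definition. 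Multiplying the number of copies by the number of decompositions per copy gives $c_H X_H(G)$ for each $H$, and summing over $\mathcal{H}_{F_1F_2}$ yields $\sum_{H} c_H X_H(G)$, as claimed.
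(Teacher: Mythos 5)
Your proposal is correct and follows essentially the same route as the paper's proof: expand the product as a count of ordered pairs of copies, partition the pairs by the equivalence class of the union $F_1'\cup F_2'$ in $\mathcal{H}_{F_1F_2}$, and then count $c_H$ decompositions for each of the $X_H(G)$ copies of $H$. If anything, you are slightly more careful than the paper, which simply asserts the per-copy count equals $c_H$, whereas you explicitly justify its independence of the chosen copy via the isomorphism carrying decompositions bijectively.
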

\begin{proof}
We start by writing:
\begin{equation}
\label{copies-product-1}
X_{F_1}(G)X_{F_2}(G)
= \sum_{F_1'\subset G}1_{\{F_1'\equiv F_1\}}
\sum_{F_2'\subset G}1_{\{F_2'\equiv F_2\}}
= \sum_{\substack{F_1'\subset G\\F_2'\subset G}}1_{\{F_1'\equiv F_1\}}1_{\{F_2'\equiv F_2\}}.
\end{equation}
Now, from \cref{overlapping-copies}, we first note that by construction of $\mathcal{H}_{F_1F_2}$, for each pair $F_1',F_2'$ in the sum, $1_{\{F_1'\equiv F_1\}}1_{\{F_2'\equiv F_2\}}=1$ if and only if there exists $H\in\mathcal{H}_{F_1F_2}$ such that $F_1'\cup F_2' \equiv H$. Therefore we can reindex the sum in~\eqref{copies-product-1} as follows:
\begin{equation*}
X_{F_1}(G)X_{F_2}(G)
= \sum_{F_1',F_2'\subset G}1_{\{\exists H\in\mathcal{H}_{F_1F_2}\,:\, F_1'\cup F_2'\equiv H\}}
= \sum_{H\in\mathcal{H}_{F_1F_2}}\ \sum_{F_1',F_2'\subset G}1_{\{F_1'\cup F_2'\equiv H\}}.
\end{equation*}
We now note that by definition of $c_H$, for each copy of $H$ in $G$, there will be $c_H$ pairs $(F_1',F_2')$ of copies of $F_1$ and $F_2$ in $G$ such that $F_1'\cup F_2' = H$. Therefore we can simplify the sum above to obtain:
\begin{equation*}
X_{F_1}(G)X_{F_2}(G)
= \sum_{H\in\mathcal{H}_{F_1F_2}} c_H\sum_{H'\subset G}1_{\{H'\equiv H\}}
= \sum_{H\in\mathcal{H}_{F_1F_2}} c_HX_H(G),
\end{equation*}
yielding the desired result.
\end{proof}

With these tools in hand, we compute the first two moments of $X_F(G)$ when $G\sim G(n,f)$.

\begin{Proposition}\label{first-moms}
Fix two graphs $F$ and $F'$ and a random graph $G\sim G(|G|,f)$ such that $|F|+|F'|\leq |G|$. Then, we have that
\begin{align*}
\E X_F(G)&= X_F(K_G)\mu_F(f)\\
\cov(X_F(G),X_{F'}(G))&= \sum_{H\in\mathcal{H}_{FF'}^\ast}c_HX_H(K_G)\big(\mu_H(f)-\mu_F(f)\mu_{F'}(f)\big).
\end{align*}
\end{Proposition}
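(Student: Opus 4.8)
The plan is to establish the two moment formulas in sequence, proving the mean first and then using it together with Lemma~\ref{copies-product} to obtain the covariance.

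For the mean, I would write $X_F(G)=\sum_{F_j}1_{\{F_j\subset G\}}$, where the sum ranges over the $X_F(K_G)$ copies $F_j$ of $F$ embedded in the complete graph $K_G$. By linearity of expectation it suffices to evaluate $\P[F_j\subset G]$ for a single copy. Conditioning on the node features $\{x_u\}$ and using that edges form independently with probability $f(x_u,x_v)$ gives $\P[F_j\subset G\mid\{x_u\}]=\prod_{uv\in F_j}f(x_u,x_v)$; integrating against the iid uniform features yields
\[
\P[F_j\subset G]=\int_{[0,1]^{|F|}}\prod_{uv\in F}f(x_u,x_v)\prod_{u\in F}dx_u=\mu_F(f).
\]
A relabeling argument shows this value is independent of the chosen copy $F_j$, so summing over all $X_F(K_G)$ copies gives $\E X_F(G)=X_F(K_G)\mu_F(f)$.

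For the covariance I would expand $\E[X_F(G)X_{F'}(G)]$ through Lemma~\ref{copies-product} and apply the mean formula to each term,
\[
\E[X_F(G)X_{F'}(G)]=\sum_{H\in\mathcal{H}_{FF'}}c_H\,\E X_H(G)=\sum_{H\in\mathcal{H}_{FF'}}c_H\,X_H(K_G)\,\mu_H(f),
\]
which is legitimate since every $H\in\mathcal{H}_{FF'}$ has $|H|\le|F|+|F'|\le|G|$ and hence embeds in $K_G$. The key step is to cast the product of means into a matching shape: applying Lemma~\ref{copies-product} to the complete graph itself gives $X_F(K_G)X_{F'}(K_G)=\sum_{H\in\mathcal{H}_{FF'}}c_H X_H(K_G)$, so that
\[
\E[X_F(G)]\,\E[X_{F'}(G)]=\mu_F(f)\mu_{F'}(f)\sum_{H\in\mathcal{H}_{FF'}}c_H X_H(K_G).
\]
Subtracting, the two sums share the same index set and combine into $\sum_{H\in\mathcal{H}_{FF'}}c_H X_H(K_G)\big(\mu_H(f)-\mu_F(f)\mu_{F'}(f)\big)$.

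The final step, which explains why the sum reduces to $\mathcal{H}_{FF'}^\ast$, is to check that the disjoint-union term $H=F\sqcup F'$ contributes nothing: since $F$ and $F'$ occupy disjoint vertex sets, the defining integral factorizes as $\mu_{F\sqcup F'}(f)=\mu_F(f)\mu_{F'}(f)$, so the bracket vanishes precisely on that term and it may be dropped. I expect the main obstacle to be the symmetry argument showing $\P[F_j\subset G]$ is independent of the chosen copy—this rests on the exchangeability of the iid uniform features—together with the bookkeeping that the hypothesis $|F|+|F'|\le|G|$ guarantees every $H$, including the disjoint union on $|F|+|F'|$ vertices, embeds in $K_G$ so that the product identity for $K_G$ holds without degeneracy.
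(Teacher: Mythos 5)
Your proof is correct, and your computation of the mean coincides with the paper's (sum of indicators over copies in $K_G$, condition on the latent features, integrate). For the covariance, however, you take a genuinely different and arguably cleaner route. The paper expands $\cov(X_F(G),X_{F'}(G))$ as a double sum over all pairs of copies $(F_i,F'_{i'})$ in $K_G$, observes that vertex-disjoint pairs contribute zero because the corresponding indicators are independent (this is where the hypothesis $|F|+|F'|\le|G|$ enters, as the condition making disjoint placement possible), and then regroups the surviving overlapping pairs by their union shape, re-running inline the same re-indexing argument used to prove Lemma~\ref{copies-product}. You instead invoke Lemma~\ref{copies-product} twice as a black box---once inside an expectation applied to $G$ to get the second moment, once deterministically applied to $K_G$ so that $\E X_F(G)\,\E X_{F'}(G)$ is expressed over the same index set---and then delete the disjoint-union term via the analytic identity $\mu_{F\sqcup F'}(f)=\mu_F(f)\mu_{F'}(f)$, which holds by Fubini since the integrand factorizes over disjoint variable sets. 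What your route buys: it is modular (no repetition of the pair-regrouping argument), and it makes transparent that the bracket vanishes identically at $H=F\sqcup F'$, so the identity over the full index set $\mathcal{H}_{FF'}$ holds for any $|G|$ and the size hypothesis is not really load-bearing in your derivation. What the paper's route buys: it exhibits the probabilistic mechanism behind the cancellation---independence of copies placed on disjoint vertex sets---which is precisely the intuition that carries over to the variance analysis in Theorems~\ref{lyapu} and~\ref{lyapu2}.
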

\begin{proof}
We prove each statement in succession. To begin, call $F_1,\dots,F_m$ the $m=X_F(K_G)$ copies of $F$ in $K_G$ ($m>0$ because $|F|\leq |G|$). Then,
$
X_F(G) = \sum_{i\in[m]}1_{\{F_i\subset G\}}, %
$
and by linearity of the expectation,
\begin{equation}\label{mean-1}
\E X_F(G) = \sum_{i\in[m]}\E1_{\{F_i\subset G\}}.
\end{equation}
Let us fix $i\in[m]$ and consider $\E1_{\{F_i\subset G\}}$. To do so we will use the law of total probability:
\begin{align*}
\E1_{\{F_i\subset G\}}
= \E \prod_{pq\in F_i} 1_{\{pq\in G\}}
&= \E\left[\E \left[\,\prod_{pq\in F_i} 1_{\{pq\in G\}}\,\Big\vert\,\{x_l\}_{l\in F_i}\,\right]\right]\\
&= \E\left[\,\prod_{pq\in F_i} f(x_p,x_q)\,\right]=\mu_F(f).
\end{align*}
Therefore, $\E1_{\{F_i\subset G\}}$ does not depend on $i$, and resuming from~\eqref{mean-1}, we obtain
\begin{equation}\label{mean-2}
\E X_F(G) = \sum_{i\in[m]}\mu_F(f)=m\mu_F(f)
\end{equation}
which is the desired result.

We now turn to the variance. Call $F_1',\dots,F_{m'}'$ the $m'=X_{F'}(K_G)$ copies of $F'$ in $K_G$ ($m'>0$ because $|F'|\leq |G|$). We first write that
\begin{align}
\nonumber
\cov(X_F(G),X_{F'}(G))
&=\E \left[
\left(\sum_{i\in[m]}\left\{1_{\{F_i\subset G\}}-\mu_F(f)\right\}\right)\left(\sum_{i\in[m']}\left\{1_{\{F_i'\subset G\}}-\mu_{F'}(f)\right\}\right)\right]
\\
\label{var-1}
&= \sum_{i\in[m],i\in[m']}\E \left[\big(1_{\{F_i\subset G\}}-\mu_F(f)\big)\big(1_{\{F_{i'}'\subset G\}}-\mu_{F'}(f)\big)\right].
\end{align}
Now, we observe that if $F_i$ is disjoint form $F_{i'}'$ (i.e., $F_i\cap F_{i'}'=\emptyset$ which is possible because $|F|+|F'|\leq |G|$) then $1_{\{F_i\subset G\}}$ is independent from $1_{\{F_{i'}'\subset G\}}$ and we have that
\begin{align*}
\E \big[\big(1_{\{F_i\subset G\}}-\mu_F(f)\big)\big(1_{\{F_{i'}'\subset G\}}-\mu_{F'}(f)\big)\big]
&= \E \big(1_{\{F_i\subset G\}}-\mu_F(f)\big)\E\big(1_{\{F_{i'}'\subset G\}}-\mu_{F'}(f)\big)\\
&=\big(\mu_F(f)-\mu_F(f)\big)\big(\mu_{F'}(f)-\mu_{F'}(f)\big)
=0.
\end{align*}
Therefore resuming from~\eqref{var-1} we obtain that
\begin{align*}
\cov(X_F(G),X_{F'}(G))
&=\sum_{\substack{i\in[m],j\in[m']\\F_i\cap F_[i']'\neq\emptyset}}\E \left[\big(1_{\{F_i\subset G\}}-\mu_F(f)\big)\big(1_{\{F_{i'}'\subset G\}}-\mu_{F'}(f)\big)\right]\\
&=\sum_{\substack{i\in[m],j\in[m']\\F_i\cap F_[i']'\neq\emptyset}}\left\{\E \left[1_{\{F_i\subset G\}}1_{\{F_{i'}'\subset G\}}\right]-\mu_F(f)\mu_{F'}(f)\right\}\\
&=\sum_{\substack{i\in[m],j\in[m']\\F_i\cap F_[i']'\neq\emptyset}}\left\{\E \left[1_{\{F_i\cup F_{i'}'\subset G\}}\right]-\mu_F(f)\mu_{F'}(f)\right\}.
\end{align*}
Then, by the exact same transformation we used in \cref{copies-product}, we know that each $F_i\cup F_{i'}'$ is in $\mathcal{H}_{FF'}^\ast$ (because they are not disjoint), and furthermore, for each copy of $H$ in $G$ there are $c_H$ pairs $F_u$ and $F_v'$ such that $H = F_v\cup F_v'$, so that
\begin{align*}
\cov(X_F(G),X_{F'}(G))
&= \sum_{H\in\mathcal{H}_{FF'}^\ast}c_H\sum_{H'\subset G}\big\{\E(1_{\{H'\equiv H\}})-\mu_F(f)\mu_{F'}(f)\big\}\\
&= \sum_{H\in\mathcal{H}_{FF'}^\ast}c_H\big(\E X_H(G)-X_H(K_G)\mu_F(f)\mu_{F'}(f)\big)\\
&= \sum_{H\in\mathcal{H}_{FF'}^\ast}c_HX_H(K_G)\big(\mu_H(f)-\mu_F(f)\mu_{F'}(f)\big),
\end{align*}
which is the desired result.
\end{proof}

We now turn to the proofs of \cref{lyapu,lyapu2}. As the second generalizes the first, it is sufficient to prove the second.
\begin{proof}
We obtain the result by a joint application of the Lindeberg-Feller central limit theorem and the Cramer-Wold device. To do so, we fix ${\bm a}\in\mathbb{R}^{|\F|}$ and compute the variance of our estimator projected along ${\bm a}$.

\noindent{\bf Computing the variance: }
First recall that %
$
\hat{\bm{\mu}}_\F(\G) = N^{-1}\sum_{i\in[N]}\left(\frac{X_F(G_i)}{X_F(K_{G_i})}\right)_{F\in\F}.
$ 
Therefore, denoting ``$\cdot$'' the inner product and taking the expectation over $\G$, let
\[
s_N^2 = \var\left({\bm a}\cdot{\hat{\bm\mu}}_\F(\G)\right).
\]
Then, using the independence of the $G_i$s and the bi-linearity of the covariance, we have
\begin{align*}
s_N^2
= N^{-2}\sum_{i\in[N]}\var\left({\bm a}\cdot\left(\frac{X_F(G_i)}{X_F(K_{G_i})}\right)_{F\in\F}\right)
= N^{-2}\sum_{i\in[N]}\sum_{F,F'\in\F}\frac{{\bm a}_F{\bm a}_{F'}\cov\left(X_F(G_i),X_{F'}(G_i)\right)}{X_F(K_{G_i})X_{F'}(K_{G_i})}.
\end{align*}
To proceed, recall that for each $i\leq N$, $G_i\sim G(n_i,f_i)$. Then, we may use \cref{first-moms} to obtain
\[
s_N^2 = N^{-2}\sum_{i\in[N]}\sum_{F,F'\in\F}
\frac{{\bm a}_F{\bm a}_{F'}\sum_{H\in\mathcal{H}_{FF'}^\ast}c_HX_H(K_{G_i})\big(\mu_H(f_i)-\mu_F(f_i)\mu_{F'}(f_i)\big)}{X_F(K_{G_i})X_{F'}(K_{G_i})}.
\]
Because all sums are finite, we can reorder the summations, leading to
\begin{align}
\nonumber
s_N^2 &= N^{-1}\sum_{F,F'\in\F}{\bm a}_F{\bm a}_{F'}
\sum_{H\in\mathcal{H}_{FF'}^\ast}\left(N^{-1}\sum_{i\in[N]}\frac{c_HX_H(K_{G_i})}{X_F(K_{G_i})X_{F'}(K_{G_i})}\big(\mu_H(f_i)-\mu_F(f_i)\mu_{F'}(f_i)\big)\right)\\
\label{lindeberg-1}
&=N^{-1}\sum_{F,F'\in\F}{\bm a}_F{\bm a}_{F'}\sum_{H\in\mathcal{H}_{FF'}^\ast}\omega_H(n,f;N),
\end{align}
where
$
\omega_H(n,f;N) = N^{-1}\sum_{i\in[N]}\frac{c_HX_H(K_{G_i})}{X_F(K_{G_i})X_{F'}(K_{G_i})}\big(\mu_H(f_i)-\mu_F(f_i)\mu_{F'}(f_i)\big).
$

\noindent{\bf Convergence of $\omega_H(n,f;N)$: }
To proceed we must show that $\omega_H(n,f;N)$ converges to a limit as $N$ diverges. We will achieve this by showing that $\omega_H(n,f;N)$ is Cauchy. To do so, observe that
\begin{align*}
|\omega_H&(n,f;N+1)-\omega_H(n,f;N)|\\
&= \left|(N+1)^{-1}\left(N\omega_H(n,f;N)+\frac{c_HX_H(K_{G_{N+1}})}{X_F(K_{G_{N+1}})X_{F'}(K_{G_{N+1}})}\right.\right.\qquad\quad\\
&\pushright{\left.\left.
\vphantom{\frac{c_HX_H(K_{G_{N+1}})}{X_F(K_{G_{N+1}})X_{F'}(K_{G_{N+1}})}}
\big(\mu_H(f_{N+1})-\mu_F(f_{N+1})\mu_{F'}(f_{N+1})\big)\right)-\omega_H(n,f;N)\right|}\\
&= \left|(N+1)^{-1}\frac{c_HX_H(K_{G_{N+1}})}{X_F(K_{G_{N+1}})X_{F'}(K_{G_{N+1}})}\big(\mu_H(f_{N+1})-\mu_F(f_{N+1})\mu_{F'}(f_{N+1})\big)-\frac{1}{N+1}\right|.\!
\end{align*}
Then, we recall that $X_F(K_G)=\binom{|G|}{|F|}\mathrm{aut}(F)$ (see for instance~\cite{bollobas2009metric}), where $\mathrm{aut}(F)$ is the number of automorphisms of $F$ (the number of bijections from the vertex set of $F$ to itself that preserve adjacency.) Then, as $|H|<|F|+|F'|$, we have
\[
\frac{X_H(K_{G_{N+1}})}{X_F(K_{G_{N+1}})X_{F'}(K_{G_{N+1}})} =
\frac{\mathrm{aut}(H)}{\mathrm{aut}(F)\mathrm{aut}(F')}\frac{\binom{n_{N+1}}{|H|}}{\binom{n_{N+1}}{|F|}\binom{n_{N+1}}{|F'|}}\leq \frac{\mathrm{aut}(H)}{\mathrm{aut}(F)\mathrm{aut}(F')}.
\]
As furthermore $f_{N+1}$ is bounded by $1$, we have
$
|\big(\mu_H(f_{N+1})-\mu_F(f_{N+1})\mu_{F'}(f_{N+1})\big)|\leq1, %
$
leading to $|\omega_H(n,f;N+1)-\omega_H(n,f,N)|\leq \frac{C}{N+1}$ for $C\!=\!1\!+\!c_H\mathrm{aut}(H)/{\mathrm{aut}(F)\mathrm{aut}(F')}$. Thus, the sequence $\omega_H(n,f;N)$ is Cauchy, and we may call $\omega_H(n,f)$ its limit; i.e.,
\[
\lim_{N\to\infty}\omega_H(n,f;N) = \omega_H(n,f).
\]

Then, resuming from~\eqref{lindeberg-1}, and writing ${\bm\Sigma}_\F$ the matrix indexed by $\F$ such that
\[
({\bm\Sigma}_\F(n,f))_{FF'}=\sum_{H\in\mathcal{H}_{FF'}^\ast}\omega_H(n,f),
\]
we have $\lim_{N\to\infty} Ns_N^2 = {\bm a}^\top{\bm\Sigma}_\F(n,f){\bm a}$.

\noindent{\bf Satisfying the Lindeberg-Feller condition: }
To invoke the Lindeberg-Feller central limit theorem, we must show that our sequence verifies the so called Lindeberg-Feller condition. Recall that the sequence under study is
\[
Y_i:={\bm a}\cdot (X_F(G_i)/X_F(K_{G_i}))_{F\in\F}-{\bm a}\cdot{\bm\mu}_\F(f_i),
\]
and that the variance of the partial sum is $N^2s_N^2$. Therefore, the Lindeberg-Feller condition we need to satisfy is the following:
\begin{equation}\label{lindeberg-2}
\forall \epsilon>0 \lim_{N\to\infty}\frac{1}{N^2s_N^2}\sum_{i\in[N]}
\E\left[Y_i1_{\{|Y_i|>\epsilon Ns_N\}}\right]=0.
\end{equation}
To verify the condition we first fix $\epsilon>0$. Then, observe that since for each $i$ and $F$ we have $X_F(G_i)/X_F(K_{G_i})\leq1$ and $\mu_F(f_i)\leq1$, we have that $|Y_i|\leq 2\|{\bm a}\|_1$ by the triangle inequality. Therefore, as $Ns_N\to\infty$ as $N$ grows, we may fix an $N$ such that for all $N'>N$ we have $\|{\bm a}\|_1\leq\epsilon Ns_N$. In this setting, the sum in~\eqref{lindeberg-2} is equal to zero for all $N'>N$, and the condition is verified. Therefore, we have that
\[
\frac{1}{Ns_N}\sum_{i\in[N]}Y_i\to\mathrm{Normal}(0,1).
\]

Reverting to the notation of the statement of the Theorem, we have that for any ${\bm a}$
\[
\sqrt{N}\big({\bm a}\cdot {\hat{\bm{\mu}}}_\F(\G)-{\bm a}\cdot {\bm{\mu}}_\F(f)\big)\to\mathrm{Normal}(0,{\bm a}^\top{\bm\Sigma}_\F(n,f){\bm a}),
\]
which is sufficient to obtain the claimed result for the limit in distribution.
\end{proof}

We now turn to the proof of \cref{coro}.
\begin{proof}
The result follows almost immediately from \cref{lyapu} and Slutsky's theorem.

First, since $|\G|$ and $|\G'|$ tend to infinity and both $n$ and $n'$ are large enough, we have
\[\begin{cases}
\sqrt{|\G|}\big(\hat{\bm\mu}_\F(\G)-{\bm\mu}_\F(f)\big) \to \mathrm{Normal}\big(0,{\bm\Sigma}_\F(n,f)\big),\\
\sqrt{|\G'|}\big(\hat{\bm\mu}_\F(\G')-{\bm\mu}_\F(f)\big) \to \mathrm{Normal}\big(0,{\bm\Sigma}_\F(n',f)\big).
\end{cases}\]
As both samples are independent, linear combinations are also multivariate Gaussian. Therefore, multiplying the first line by $\sqrt{|\G'|}/{\sqrt{|\G|+|\G'|}}$, the second by $\sqrt{|\G|}/{\sqrt{|\G|+|\G'|}}$ (both ratios being in $(0,1)$, the limit in distribution is unaffected), and taking the difference, we obtain, with ${\bm\Sigma}_\F(n,n',f)=\lim_{|\G|,|\G'|\to\infty}\left\{\frac{|\G'|}{|\G|+|\G'|}{\bm\Sigma}_\F(n,f)\!+\!\frac{|\G|}{|\G|+|\G'|}{\bm\Sigma}_\F(n',f)\right\}$
\begin{align*}
\frac{\sqrt{|\G||\G'|}}{\sqrt{|\G|+|\G'|}}\big(\hat{\bm\mu}_\F(\G)-\hat{\bm\mu}_\F(\G)\big)
\to \mathrm{Normal}\big(0,{\bm\Sigma}_\F(n,n',f)\big),
\end{align*}
which is the desired limit in distribution.

To obtain the consistent estimator of ${\bm\Sigma}_\F(n,n',f)$ we first recall that for $F,F'\in\F$
\[
\big({\bm\Sigma}_\F(n,f)\big)_{FF'} =
\sum_{H\in\mathcal{H}_{FF'}^\ast}\omega_H(n)\big(\mu_H(f)-\mu_F(f)\mu_{F'}(f)\big).
\]
Then, with
$
\omega_H(n,n') = \lim_{|\G|,|\G'|\to\infty}\left(\frac{|\G'|}{|\G|+|\G'|}\omega_H(n)+\frac{|\G|}{|\G|+|\G'|}\omega_H(n')\right),
$
we have
\[
\big({\bm\Sigma}_\F(n,n',f)\big)_{FF'} =
\sum_{H\in\mathcal{H}_{FF'}^\ast}\omega_H(n,n')\big(\mu_H(f)-\mu_F(f)\mu_{F'}(f)\big).
\]
There observe that:
\begin{itemize}[leftmargin=*]
\item[--] As $|\G|$ and $|\G'|$ grow we have that
\begin{align*}
\omega_H(n,n';|\G|,|\G'|)
=\frac{|\G'|}{|\G|+|\G'|}\omega_H(n;|\G|)+\frac{|\G|}{|\G|+|\G'|}\omega_H(n',|\G'|)
\to \omega_H(n,n').
\end{align*}
\item[--] All the $\mu_H(f)$ in $H\in\mathcal{H}_{FF'}$ may be estimated by
$
\hat\mu_H(\G\cup\G') = \frac{1}{|\G\cup\G'|}\sum_{G\in\G\cup\G'}\frac{X_H(G)}{X_H(K_G)}.
$
Furthermore, both $\mu_F(f)$ and $\mu_{F'}(f)$ may be estimated in the same way.
\end{itemize}
Then, by a direct application of the Slutsky's theorem, we have that
\[
\big({\hat{\bm\Sigma}}_\F(n,n',f)\big)_{FF'} \!=\!\!
\sum_{H\in\mathcal{H}_{FF'}^\ast}\!\!\omega_H(n,n';|\G|,|\G'|)\big(\hat\mu_H(\G\cup\G')-\hat\mu_F(\G\cup\G')\hat\mu_{F'}(\G\cup\G')\big)
\]
is an asymptotically normal estimator of $\big({\bm\Sigma}_\F(n,n',f)\big)_{FF'}$ converging at rate $\sqrt{|\G|+|\G'|}$, which yields the claimed result.
\end{proof}

\section{Finite sample experiments}
Before we proceed, we consider a simulation experiment to determine how large the sample size $N$ must be for the asymptotic limit to be a satisfactory approximation of the statistic's distribution. We present our result in \cref{here:Experiment}. There, we observe that fairly small $N$, on the order of 100 even with small $n_i$-s, may be sufficient. Furthermore, results from~\cite{BickelLevina2012} suggest that larger networks would make this convergence faster.

\begin{figure}
\centering
\includegraphics[width=.8\textwidth]{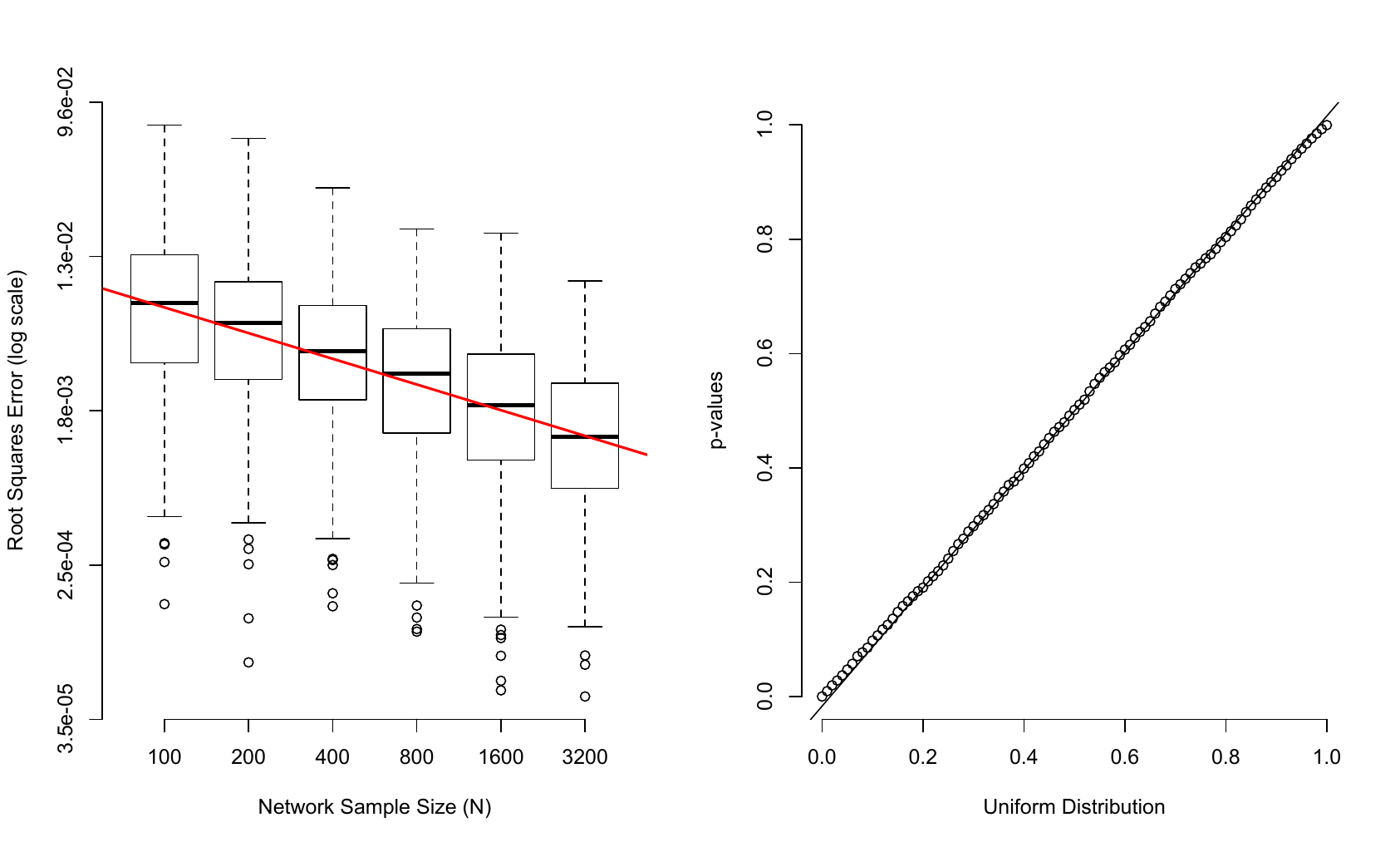}
\caption{Assessing the quality of the asymptotic approximation. We sample $500$ replicates of network samples of sizes $N$ ranging from $100$ to $3200$, each drawn from a random $2$-block blockmodel while the sizes of the networks are fixed by the digits of $\pi$ plus $8$. On each sample, we evaluate the procedure of Figure~\ref{Example1}. We observe that the average mean squared error shrinks as $N$ increases (left plot, trend in red solid line). The observed rate of convergence is in line with our theoretical results. We observe that the QQ-plot of the $p$-values is close to that of the uniform distribution. This QQ-plot aggregates all sample sizes ($N$), but the QQ-plot for $N=100$ is already equivalent to that one.}
\label{here:Experiment}
\end{figure}

\section{Power experiments}
\subsection{Direct application of Theorem~1}
Here we consider the simple problem where we test for a sample being drawn from a given model, as in Figure 2.

Specifically we first fix $k$ and consider the following family of blockmodels $B_k$: graphs over $100$ vertices and $k$ equisized blocks, of fixed global density $\rho = 1/80$, and where between blocks probability of connection is $0$, and within blocks probability of connection is constant (set to that the overall density is $\rho$.)

Then, for $k=2$ to $4$, we draw $12$ graphs from $B_k$, yielding $\G_k$. This enables us to investigate the power of our test. Specifically we estimate the probability to reject when we test for the null of $B_k$ for $\G_{k'}$. To obtain power estimates, we apply Theorem~1 with a level of .05 over 5000 replicates, yielding the following table:

\begin{table}\centering
\renewcommand{\arraystretch}{.8}
\begin{tabular}{cccc}
\toprule
\multirow{2}{*}{Sample name}
&\multicolumn{3}{c}{Null model} \\
\cmidrule(r){2-4}
& \multicolumn{1}{c}{$B_2$}
& \multicolumn{1}{c}{$B_3$}
& \multicolumn{1}{c}{$B_4$} \\
\midrule
$\G_2$ & 0.047 & 0.108 & 0.460\\
$\G_3$ & 0.332 & 0.053 & 0.091\\
$\G_4$ & 0.754 & 0.232 & 0.058\\
\bottomrule
\end{tabular}
\caption{Power when the number of blocks is miss-specified. We find that our test has better power when the null presents fewer blocks than the true model. We also see that as the block get denser with larger $k$, it is computationally harder to maintain the power of the test because of precision with inverting the covariance matrix which is almost singular.}
\end{table}

\subsection{Direct application of Corollary~1}
Resuming from the main document, Section~5. We undertake the following experiment: we estimate kernels (through the random-dot-product framework~\citep{athreya2018survey}) over the samples $\G$, $\G_1^{0.2}$, and $\G_2^{0.2}$, that we call $f$, $f_1$, and $f_2$ respectively; then, for several $\gamma\in[0,1]$, we consider the power of  Corollary~1's test when the samples compared are of the same cardinality as $\G_1^{0.2}$, and $\G_2^{0.2}$, but drawn i.i.d. from $G_{70}\big(\gamma f_1 + (1-\gamma)f\big)$ and $G_{70}\big(\gamma f_2 + (1-\gamma)f\big)$ respectively. 

The purpose of the exercise is to see how large $\gamma$ needs to be for the tests to show reasonably good power. But also to compare the powers of the test to other in the literature. Specifically, we compare to both~\citet{tang2016semipara} and~\citet{tang2016nonpara}. These tests are tightly related to that of~\cite{durante2018}. Call $\beta_{{\tt motif}}$, $ \beta_{{\tt semipar}}$ and $\beta_{{\tt nonpar}}$ the respective power of our test, that of~\citet{tang2016semipara} and that of~\citet{tang2016nonpara} respectively. Then, we observe that:
\begin{itemize}
\item Power is sharply increasing with $\gamma$ for all tests (\cref{plot}).
\item We cannot reject the null that {\tt motif} is at least as powerful as {\tt semipar} for all $\gamma$ (\cref{table}).
\item We can reject the null that {\tt nonpar} is more powerful than {\tt motif} (\cref{table}).
\end{itemize}

\begin{table}[h]
\centering
\renewcommand{\arraystretch}{.8}
\begin{tabular}{ccc}
\toprule
\multirow{2}{*}{$\gamma$}
&\multicolumn{2}{c}{$t$-test outcomes (at $.05$) under the null that} \\
\cmidrule(r){2-3}
& \multicolumn{1}{c}{$\beta_{{\tt motif}} > \beta_{{\tt semipar}}$}
& \multicolumn{1}{c}{$\beta_{{\tt nonpar}} > \beta_{{\tt motif}}$}\\
\midrule
0.1 & Fail to reject & Fail to reject\\
0.2 & Fail to reject & Fail to reject\\
0.3 & Fail to reject & Reject\\
\bottomrule
\end{tabular}
\caption{Outcomes of paired $t$-tests across $\gamma$ under two nulls.}
\label{table}
\end{table}

\begin{figure}[h]
\centering\includegraphics[width=.8\textwidth]{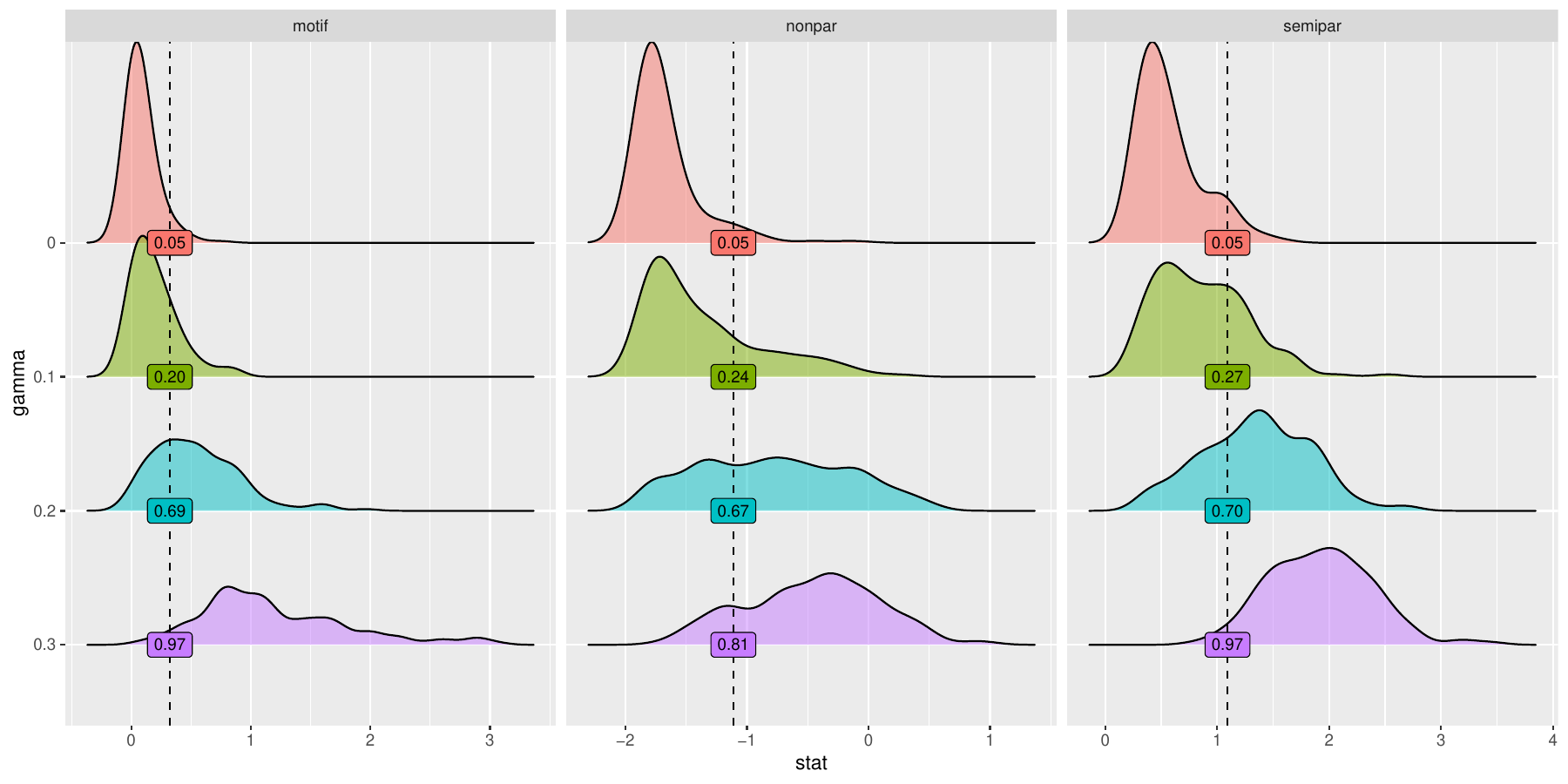}
\caption{Density plots of statistics for each test and $\gamma$. For each $\gamma$ we replicate each test 150 times, each time generating a new sample of graphs.}
\label{plot}
\end{figure}

\section{Methods to produce Figures~\protect\ref{Example1}-\protect\ref{Example3}}
The code to produce all figures and tables is available from the authors. All computation are done in {\tt{R}}. To ease presentation we always denote $K$ the number of blocks in the blockmodel, or flexible blockmodel, under consideration. Furthermore, we call $\pi[k]$ the $k$-th digit of $\pi$.

\subsection{Subgraph counting}
To count the number of copies of \Textedge\ , \Texttriangle\ ,\Textsquare\ , we use the formulas presented in~\cite{alon1997cycles}. Specifically, with $A_{ij}^{(k)}$ the $i,j$-th entry of the $k$-th power of the adjacency matrix of $G$, we use $X_{\Textedge\ \,}(G)   =\ \frac12\sum_{i,j\in[|G|]}A_{ij}$, $X_{\Texttriangle\ }(G) =\ \frac16\sum_{i\in[|G|]}A_{ii}^{(3)}$, and $X_{\Textsquare\ }(G)  =\ \frac18\left(\sum_{i,j\in[|G|]}A^{(3)}_{ij}A_{ij}-4\sum_{i\in[|G|]}\binom{A_{ij}^{(2)}}{2}-2X_{\Textedge\ \,}(G)\right)$. For larger subgraphs we use the results of~\cite{maugis2017fast}.

\subsection{Figure~\protect\ref{Example1}}
We first describe the network sample. The sample is such that for each $i\leq 100$, $G_i\sim G(\pi[i]+30,f_{\bm B})$, where ${\bm B}_{11}=.06$, ${\bm B}_{22}=.06$, ${\bm B}_{12}={\bm B}_{21}=.02$ and $f_{\bm B}(u,v) = {\bm B}_{1_{\{4u\geq 1\}}+1,1_{\{4v\geq 1\}}+1}$.

We now describe the construction of the confidence region and $p$-value under the null. The first task is to compute the subgraph densities (the $\mu_F(f_{\bm B})$ and $\mu_H(f_{\bm B})$). To do so, we use the formulas presented in~\cite{coulson2016poisson,BickelLevina2012} to compute subgraph densities in the blockmodel. More specifically, they show that for a kernel blockmodel $f$ with $K$ blocks such that the probability of being in each block is $\pi_i$, and the block matrix is ${\bm B}$, then
\[
\mu_F(f) = \sum_{i_i,\dots,i_{|F|}\in [K]}\,\prod_{j\in[|F|]}\pi_j\prod_{pq\in F}{\bm B}_{i_pi_q}.
\]
Practically, this is achieved by imbricated loops.

That we can compute the subgraph densities under the null directly implies that we can compute both ${\bm \mu}_\F(f_{\bm B})$ and ${\bm\Sigma}_\F(f_{\bm B})$. Then, the construction of the confidence ellipse as well as of the $p$-value using the Mahalanobis distance are classical statistical inference methods which we need not describe here.

The black region is built with $f_B$. The gray region is built with $f_{{\bm B}'}$ where ${\bm B}_{11}'=.06$, ${\bm B}_{22}'=.05$, ${\bm B}_{12}'={\bm B}_{21}'=.02$ and $f_{{\bm B}'}(u,v) = {\bm B}'_{1_{\{4u\geq 1\}}+1,1_{\{4v\geq 1\}}+1}.$

\subsection{Figure~\protect\ref{Example2}}
We first describe the network sample. The sample is such that for each $i\leq 200$, $G_i\sim G(\pi[k]+30,f_{\bm B})$, where for $i\neq j\in[3]$, ${\bm B}_{ii}=.06$, ${\bm B}_{ij}=.02$ and $f_{\bm B}(u,v) =
{\bm B}_{\lfloor3u\rfloor+1,\lfloor3v\rfloor+1}.$

We now describe how we produced the embedding shape and its confidence region. First, to produce the embedding shape, we compute ${\bm\mu}_\F(f)$ for a a large number of elements of $f\in\D({\bm B})$. Practically, we parametrize $\D({\bm B})$ by a $K$-dimensional vector $\theta$, the entries of which are the sizes of each block, and call the associated kernel $f_\theta$. Then, we build a grid $S$ over the $K$ dimensional simplex of step-size $0.01$, and for each $\theta\in S$ we compute ${\bm\mu}_\F(f_\theta)$. To produce the confidence region, we produce the confidence ellipse for each $\theta\in S$ which we achieve by computing ${\bm\Sigma}_\F(n,f_\theta)$ for each $\theta$ in our grid $S$.

The gray region is built with $f_B$. The black region is built with $f_{{\bm B}'}$ where, for $i\neq j\in[3]$, ${\bm B}_{ii}'=.055$, ${\bm B}_{ij}'=.04$ and $f_{{\bm B}'}(u,v) = {\bm B}'_{\lfloor3u\rfloor+1,\lfloor3v\rfloor+1}.$

\subsection{Figure~\protect\ref{Example3}}
We first describe the network sample. The sample is such that independently for each $i\leq 1000$, $G_i\sim G(\pi[i]+50,f_{\bm B})$, where $f_{\bm B}$ is drawn uniformly at random over $\D({\bm B})$ with for $i\neq j\in[3]$, ${\bm B}_{ii}=.7$, ${\bm B}_{ij}=.2$.

We now describe how we produced the surface where ${\bm\mu}_\F(f)$ may live as well as its confidence region. First, we know that ${\bm\mu}_\F(f)$ may realize any point in the convex hull of the embedding shape. Therefore, we build the embedding shape as for Figure~\ref{Example2}, and then present its convex hull. Building the confidence region is achieved using the following reasoning:
\begin{itemize}
\item[--] Observe that for any kernel $f$ and $H\in\mathcal{H}_{FF'}$, $\mu_H(f)\geq\mu_F(f)\mu_{F'}(f)$. This can for instance be directly recovered from the formulas in Proposition~\ref{first-moms}. Then, for any sequences $n$ and $f$, we have that $({\bm\Sigma}_\F(n,f))_{FF'} \geq 0$, so that all correlations are positive, and the greatest amplitude of the confidence ellipse will be found on its first quadrant; i.e., the variance of ${\bm a}\cdot {\bm\mu}_\F(\G)$ will be maximal for some ${\bm a}\in\mathbb{R}^{|\F|}$ such that ${\bm a}_F>0$ for all $F\in\F$.
\item[--] For any sequences $n$ and $f$, we have that
\begin{align*}
({\bm\Sigma}_\F&(n,f))_{FF'}
=\sum_{F,F'\in\F}\sum_{i\in[N]}\frac{c_HX_H(K_{n_i})}{X_F(K_{n_i})X_{F'}(K_{n_i})}\big(\mu_H(f_i)-\mu_F(f_i)\mu_{F'}(f_i)\big)\\
&\leq\sum_{F,F'\in\F}\sum_{i\in[N]}\frac{c_HX_H(K_{n_i})}{X_F(K_{n_i})X_{F'}(K_{n_i})}\mu_H(f_i)\\
&\leq\sum_{F,F'\in\F}\sum_{i\in[N]}\frac{c_HX_H(K_{n_i})}{X_F(K_{n_i})X_{F'}(K_{n_i})}\max_{f\in\D(B)}\mu_H(f):= ({\bm\Sigma}_\F^\dagger(n,f))_{FF'}.
\end{align*}
Then, for any ${\bm a}$ in the first quadrant we have that ${\bm a}^\top{\bm\Sigma}_\F(n,f) {\bm a}\leq {\bm a}^\top{\bm\Sigma}_\F^\dagger(n,f) {\bm a}$. Therefore, the maximal radius of the ellipse associated with ${\bm\Sigma}_\F^\dagger(n,f)$ on the first quadrant is larger than that induced by ${\bm\Sigma}_\F(n,f)$. Thus, by the previous item, the maximal radius of ${\bm\Sigma}_\F^\dagger(n,f)$ is larger than that of ${\bm\Sigma}_\F(n,f)$. It follows that the auxiliary sphere of the ellipse associated with ${\bm\Sigma}_\F^\dagger(n,f)$ contains that associated with ${\bm\Sigma}_\F(n,f)$.
\item[--] Finally, $\max_{f\in\D({\bm B})}\mu_H(f)$ is dominated by $\mu_H(f_\ast)$, where $f_\ast$ is the constant kernel equal to the maximal entry of $B$. In this final setting, we need only compute the ${\bm\mu}_H(f_\ast)$, where $f_\ast$ describes an Erd\H{o}s-R\'enyi random graph. This makes the computation straightforward as we then have ${\bm\mu}_H(f_\ast) = \max_{ij}{\bm B}_{ij}^k$, where $k$ is the number of edges in $H$.
\end{itemize}
Following this argument, our confidence region is the union of the auxiliary spheres of the ellipse associated with ${\bm\Sigma}_\F^\dagger(n,f)$ centered at each point in the convex hull of the embedding shape. Finally, the gray region is built with ${\bm B}'$, where for $i\neq j\in[4]$, ${\bm B}_{ii}'=.8$, ${\bm B}_{ij}=.55$.
\bibliographystyle{chicago}
\bibliography{NetworkSample_bib}
\end{document}